\documentclass[a4paper, USenglish, cleveref, autoref, thm-restate]{lipics-v2021}

\nolinenumbers 
\hideLIPIcs

\usepackage{amsmath,amssymb,amsfonts, amsthm}
\usepackage{mathdots}
\usepackage{graphicx}
\usepackage{tikz}

\title{The local complexity of certifying parity}

\author{Nicolas Bousquet}{CNRS, INSA Lyon, UCBL, LIRIS, UMR5205, F-69622 Villeurbanne, France }{0000-0003-0170-0503}{}{}{}

\author{Laurent Feuilloley}{CNRS, INSA Lyon, UCBL, LIRIS, UMR5205, F-69622 Villeurbanne, France }{0000-0002-3994-0898}{}{}{}

\author{Jorge Valenzuela}{Universidad de Chile, Santiago, Chile}{0009-0008-3936-3227}{}{}{}

\author{Sébastien Zeitoun}{CNRS, INSA Lyon, UCBL, LIRIS, UMR5205, F-69622 Villeurbanne, France }{0009-0003-2675-8581}{}{}{}

\authorrunning{N. Bousquet, L. Feuilloley, J. Valenzuela, S. Zeitoun}

\keywords{Local certification, proof-labeling schemes, locally checkable proofs, space complexity, parity, congruence, lower bound}

\ccsdesc{Theory of computation~Distributed algorithms} 

\funding{This work is supported by the ANR grant ENEDISC (ANR-24-CE48-7768).}

\acknowledgements{The authors would like to thank Chilian-French ECOS-Sud-ANID program for funding the trip of the third author to France.}

\renewcommand{\P}{\mathcal{P}}
\renewcommand{\mod}{\text{~mod~}}
\newcommand{\C}{\mathcal{C}}

\newcommand{\N}{\mathbb{N}}
\newcommand{\T}{\mathcal{T}}
\newcommand{\dlog}{\text{{\normalshape dlog}}^\ast}
\newcommand{\tow}{\text{{\normalshape tow}}}

\newtheorem*{theorem*}{Theorem}
\newtheorem{question}{Question}
\newtheorem{open}{Open problem}

\begin{document}
\maketitle

\begin{abstract}
In this paper, we consider the problem of locally certifying that the size of a network is even, or more generally, congruent to some fixed number. 
The parity property is one of the simplest global properties, and it plays an intriguing role in local certification. 
On the one hand, it is one of the simplest properties in cycles because it is equivalent to 2-colorability, and hence can be certified with a single bit. 
On the other hand, in general graphs, no non-trivial lower bound on the size of the certificates is known, and the known upper bound basically consists in certifying the \emph{exact} value of $n$. 
In addition, the nature of the problem makes all the known lower bound approaches fail. 

We uncover a surprising landscape for parity across different models and graph structures:
\begin{itemize}
    \item In general graphs equipped with identifiers, when allowing verification radius 2, parity can be certified with a constant number of bits. 
    \item But in the model of anonymous graphs and allowing verification radius only 1, parity requires $\Omega(\log \log^*n)$ bits. 
    \item Finally, in bounded expansion graph classes (such as bounded-degree graphs and planar graphs), the lower bound does not apply: in the same restricted model we can design a constant-size certification.
\end{itemize}

We introduce several new tools that we expect to be useful in other contexts, in particular ways to \emph{encode a parent at each node with a constant number of bits} (via implicit use of the IDs and conflict-free colorings) and a new lower bound technique, with complex topologies and higher-order Ramsey-type arguments.
\end{abstract}

%\tableofcontents{}

\newpage{}

\section{Introduction}

This paper deals with local certification, a type of labeling that allows the verification of graph properties in a distributed manner. In local certification, an all-powerful \emph{prover} assigns a label to every node, referred to as the \emph{certificate} of the node. Then, each node inspects its neighborhood, including the certificates, and decides to accept or reject. Such a scheme is correct if: for any graph satisfying the property, there exists a certificate assignment that makes all nodes accept, and for any graph not satisfying the property, no such assignment exists.\footnote{For convenience, we define local certification for graph properties, though it can also apply to distributed data structures like spanning trees.}

Local certification, which originates from the study of fault-tolerance and self-stabilization, is now a well-established topic. We refer to \cite{Feuilloley21} for an introduction to the area and provide formal definitions in Section~\ref{sec:model}. 

In recent years, numerous graph properties have been studied through the lens of local certification, such as planarity, bounded treewidth, forbidden subgraphs, and connectivity. However, a very simple property has been overlooked: \emph{parity}, that is, the property of having an even number of nodes. In this paper, we study this property, and uncover its surprising and insightful behavior.  

Precisely, we are interested in the \emph{local complexity of parity}, which is the \emph{minimum number of bits per certificate needed to certify that the graph has an even number of nodes}. All our results apply to arbitrary congruence, that is, having $0 \mod k$ nodes, but we focus on parity for simplicity.

In order to have a more informed discussion, let us summarize what is known about local certification of parity.

\subsection{What is known about certifying parity}

It appears that (what we know about) the local complexity of certifying parity varies significantly depending on our assumptions about the graph topology. Let us discuss cycles, trees and general graphs. 

\paragraph*{In cycles}

In cycles, parity is equivalent to bipartiteness, which is of course not true in general graphs. Bipartiteness can easily be certified (in all graphs) by providing a 2-coloring as a certification, and this uses only 1-bit certificates. Hence, the local complexity of parity in cycles is exactly $1$. Certifying that the cycle has length $0 \mod k$, for some constant $k$, is also easy: the prover only needs to encode a counter modulo $k$ following an arbitrary orientation of the cycle,  and this uses $\log k$ bits. 
These certifications do not require the graph to have unique identifiers, and the nodes only need to communicate with their direct neighbors.

Interestingly, certifying that a cycle has an \emph{odd} number of nodes (or more generally, that its length is $a \mod k$, for fixed $a \neq 0$ and $k$) requires $\Omega(\log n)$ bits (where $n$ is the number of nodes) and unique identifiers, even if the nodes can communicate with neighbors at some arbitrary constant distance~\cite{GoosS16}. 
At an intuitive level, this difference between even and odd size can be justified in the following way. For parity, it is enough to use everywhere the same pattern of certificate, alternating 0s and 1s. Now for ``non-parity'', one could try to adapt this approach by using the same alternating pattern, except on exactly one node. But certifying that there exists exactly one node with a special role boils down to certifying the existence of a leader, which is known to require unique identifiers and local complexity $\Theta(\log n)$.

\paragraph*{In trees} 

The case of trees is essential for this paper, because all our upper bounds build on the certification in trees. 
Suppose first that we have a rooted tree, with the edges oriented towards the root. 
We can certify parity in such trees with just one bit. 
In correct instances, every node receives as its certificate the number of nodes in its subtree (including itself) modulo 2. We will refer to these counters as \emph{parity aggregates}.
Parity aggregates are easy to check in directed trees: every node verifies that the sum of its children certificates, plus 1, is equal to its certificate, modulo 2. 
This scheme can be adapted to undirected trees as follows. 
On correct instances, the prover chooses an arbitrary root, gives to every node its distance to the root modulo 3, in addition to the parity aggregates. It is well-known that such counters, that we will call \emph{mod-3 counters}, encode and certify a proper orientation in trees.  

Note that this orientation technique via mod-3 counters does not encode a proper spanning tree in general graphs. There are two main issues: (1) the counters might encode a directed graph with cycles and (2) a node might not be able to identify its parent: if it has counter $c$, there might be several neighbors with counter $c-1$.  

%\textcolor{red}{On cycles (and more generally to graphs) we can indeed also choose an orientation towards a root ans use the distance to the root modulo 3 technique to orient edges. However, the prover can fool the nodes for two reasons: (i) on treees, acyclicity ensures that the process stops (some vertex will not have any neighbor labeled with its label minus one) and that node might not exist on cycles and, (ii) on trees, every node has at most one parent which is not the case on a BFS tree.}

\paragraph*{In general graphs} 

While parity certification is well-understood in trees and cycles, no specific technique is known for parity in general graphs. 
Consequently, the state-of-the-art upper bound consists in certifying the value of $n$, and checking that it is even. 
In such a scheme, on correct instances the prover (1) certifies a spanning tree and (2) encodes at each node the size of its subtree. The nodes then (1) check the spanning tree, (2) check the subtree counters, and (3) the root checks that the value of its subtree is even. 
This scheme requires identifiers and uses $\Theta(\log n)$ bits even for just the spanning tree part, even when allowing communication at arbitrary constant distance~\cite{GoosS16}.
Regarding the lower bound, we know almost nothing: only that at least $1$ bit is required. This contrasts sharply with non-parity which requires $\Omega(\log n)$~\cite{GoosS16} in a powerful model and in cycles, as mentioned earlier. 

Given the wide gap between the trivial 1-bit lower bound, and the ``overkill'' upper bound, we ask:

\begin{question} 
What is the local complexity of parity in general graphs? Is it constant, or $O(\log n)$, or in between? Do we need identifiers to achieve it?  
\end{question}

%%%%%%%%%%%%%%%%%%%%%%%
\subsection{Motivations}
%%%%%%%%%%%%%%%%%%%%%%%%

Let us now describe our motivations to study the local certification of parity, beyond its simplicity and the gap between upper and lower bounds.

\paragraph*{A challenge for lower bound techniques}

Parity may be the simplest example of a property for which standard lower bound techniques fail. 
The two main lower bound techniques used in local certification are the so-called cut-and-plug and communication complexity techniques. 
Let us explain why the known implementations of these techniques fail, and why we believe that any sophisticated adaptation is likely to fail as well.
\begin{itemize}
    \item Cut-and-plug works by selecting accepting cycles, and using a counting argument to show that combining chunks of these cycles can create a \emph{no}-instance, where the nodes incorrectly accept~\cite{KormanKP10, GoosS16}. Since this method relies on cycles, and cycles are trivial for parity, the basic version of this technique cannot be applied. 
    While cut-and-plug has been applied to graphs more complex than cycles, \emph{e.g.} in \cite{FeuilloleyFMRRT21}, these graphs remain cycle-like and the certification for cycles can easily be adapted to them.
    In essence, to apply the counting arguments there must be a way to easily cut and combine pieces of \emph{yes}-instances, and this requirement  constrains the topology in a way that makes parity easy. 
   
    \item Communication complexity reductions typically take the following form. For every $n$, there exists a base graph, known to all nodes, with a fixed identifier assignment. This base graph contains two special subsets of vertices that are far apart in the graph, with a small cut between them. The lower bound instances are built by adding arbitrary edges within each subset. The intuition of the argument is then the following. In order to correctly decide the property, at least one node must know both special edge sets, which is a lot of information. Since this information can only be transferred via certificates, and the graph has a small cut, some node must have a large certificate.
    In such proofs, all the nodes know the base graph and thus the total number of nodes, which of course makes parity trivial.\footnote{Note that in general the a priori knowledge of the base graph is a strength and not a weakness of the technique, since it allows to derive lower bounds in a more powerful model. For parity it is simply too strong to be useful.} 

    One can of course imagine more flexible constructions, where the base graph is not fully known. However, we argue that the communication complexity argument is fundamentally inadequate for parity. This type of proof always relies on the fact that a lot of information should go through a small cut, forcing the certificates to be large. But for parity the only information needed is whether each part has an odd or even count! This basically rules out any strategy where one would only care about communication between a few well-identified parts of the graph. Thus, extensions to multi-party complexity are equally unpromising.  
\end{itemize}

%\textcolor{red}{N. Proposal to replace what's above. Congruence is maybe the simplest example of property where all the usual certification lower bound techniques fail. 
%The three main lower bound techniques are the so called cut-and-plug and communication complexity techniques~\cite{Feuilloley21} and lifts and covering maps.  \textcolor{red}{N. pourquoi la ref?}
%\\
%The cut-and-plug was initially designed for cycle topology \cite{KormanKP10, GoosS16}, which is useless for us, since cycle are easy for parity. Even if that proof technique has been adapted for more complex structures than cycles,  see \emph{e.g.} in \cite{FeuilloleyFMRRT21}. However, the spirit of the constructions are stil  based on cycle-like ideas and do not adapt for parity for that reason. 
%\\
%For communication complexity lower bounds, the core of the arguments is usually that a lot of information should go through a cut, and implicitly communication on each side of the cut is for free. But for parity this approach indeed fails: the only piece of information we need to transfer through the cut is whether each part is odd or even. For readers familiar with 2-party communication complexity: Alice can simply send to Bob whether she had an even number of vertices or not, and Bob can accept if and only he has the same count modulo $2$. This argument basiclaly rules out any strategy where one would only care about communication between to few parts of the network.} 

A coarse summary of the above is that lower bound instances that can be easily partitioned, especially via small cuts, are inadequate for proving lower bounds for parity. 
Another classic technique to prove lower bounds in distributed computing, one that does not require small cuts, is to use lifts or covering maps. 
Unfortunately, this approach also fails quickly and irreparably. Indeed, the size of a lift is always a multiple of the size of the original graph, hence the parity is preserved, and no contradiction can be derived.

\paragraph*{Congruence predicates as a class of properties}  

Recent work proved that in anonymous cycles, no property has local complexity both super-constant and sublogarithmic (that is, in $\omega(1) \cap o(\log n)$)~\cite{BousquetFZ25-DISC}. 
The properties that are in the $O(1)$-regime are basically those that can be described by a finite number of equations of the form $n = 0 \mod k$. 
Thus, parity is the canonical example of a well-motivated class of properties, 
and our work investigates the behavior of this class, when moving from cycles to general graphs. 

\paragraph*{A natural extension of the metatheorem line of research}

There is an ongoing effort to understand the local complexity of first-order (FO) and monadic second-order (MSO) formulas. These formulas capture locality in a logical framework, rather than relying on the bounded verification radius typically used in locally checkable labelings. See \cite{BlinFFGGMRT26} for a discussion of this approach.

Such properties can have a large local complexity in general graphs. For example, non-3-colorability is a simple FO formula that has complexity $\Omega(n^2/\log n)$ (almost matching the $O(n^2)$ upper bound that holds for any property). 
The main idea of the current line of research is that these properties become tractable in more structured graphs. Theorems establishing upper bounds for specific logics and structured graphs are called \emph{meta-theorems}, and a series of such theorems have emerged in recent years \cite{FeuilloleyBP22, BaterisnaC25, FraigniaudMRT23, FraigniaudMRT24, Cook0M25, BlinFFGGMRT26}. These theorems provide certifications of size $O(\log n)$ for all FO or all MSO formulas on graphs of bounded treedepth/pathwidth/treewidth/cliquewidth/expansion. 

Counting modulo a constant is a typical example of a property that \emph{cannot} be expressed in (plain) FO or MSO. In that sense, studying parity and similar properties complements the study of these logics. We will see that the meta-theorem approach also inspired us to study structured graphs.

%%%%%%%%%%%%%%%%%%%%%%%%%%%%%%%%
\subsection{Results and techniques}
%%%%%%%%%%%%%%%%%%%%%%%%%%%%%%%

Let us now state our three main results and briefly sketch their proofs. 

\paragraph*{Upper bound in general graphs}

First, we prove that \emph{parity can be certified with a constant number of bits, in a sufficiently powerful model of certification}. 

\begin{theorem*}[Informal version of Theorem~\ref{thm:upper-bound-general-graphs} in Section~\ref{sec:upper-bound-general-graphs}]
\label{thm:informal-general-graphs}
In general graphs equipped with identifiers, and verification radius 2, parity can be certified with $3$ bits.
\end{theorem*}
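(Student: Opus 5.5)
The plan is to use $2+1=3$ bits per node: a mod-3 counter $c(v)\in\{0,1,2\}$ (2 bits) and a parity aggregate $a(v)\in\{0,1\}$ (1 bit). The key point is to use the identifiers to remove the ambiguity of the mod-3 orientation in general graphs. Each node $v$ defines its \emph{parent} $p(v)$ as the neighbor $u$ of minimum identifier among those with $c(u)=c(v)-1 \bmod 3$. If no such neighbor exists, $v$ is a \emph{root}. This parent is encoded implicitly, with no extra bits. The children of $v$ are the neighbors $u$ with $p(u)=v$. Deciding whether $p(u)=v$ only requires the identifiers and counters of the neighbors of $u$, which lie at distance at most 2 from $v$. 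Hence with verification radius 2, every node $v$ can compute its set of children $C(v)$.

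The verification at $v$ has two parts:
\begin{itemize}
\item $v$ checks that $a(v)\equiv 1+\sum_{u\in C(v)} a(u) \pmod 2$;
\item if $v$ is a root, it also checks that $a(v)=0$.
\end{itemize}

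For completeness on a graph with an even number of nodes, the prover picks any node $r$ and sets $c(v)=\mathrm{dist}(r,v)\bmod 3$. Neighbors of a node in BFS layer $d$ lie in layers $d-1,d,d+1$, which are pairwise distinct mod 3. So the neighbors with counter $c(v)-1$ are exactly those in layer $d-1$. Consequently, every $v\neq r$ has a parent in the previous layer, and $r$ is the unique root, since its neighbors all have counter $1\neq 2$. The parent relation is therefore a spanning tree (a min-ID BFS tree). The prover sets $a(v)$ to the size of the subtree of $v$ mod 2, and $a(r)=n\bmod 2=0$, so all nodes accept.

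Soundness is the step I expect to be the main obstacle. In a no-instance, the counters may create cycles of parent pointers, and there may be several roots. The plan is to exploit the structure of the parent map rather than forbid these configurations. Since every non-root node has exactly one parent, the graph of parent pointers is a functional graph. Each of its (weakly connected) components is either a tree containing exactly one root, or a cycle with trees hanging on it and no root. Since the vertex set is partitioned into these components, it suffices to show that each component has even size.
\begin{itemize}
\item For a tree component with root $r$, induction from the leaves gives $a(v)\equiv|\text{subtree}(v)|$, so the component has size $\equiv a(r)=0$.
\item For a rootless component $K$, sum the local equations over all $v\in K$. Every node of $K$ is the child of exactly one node of $K$, so $\sum_{v\in K}\sum_{u\in C(v)}a(u)=\sum_{u\in K}a(u)$. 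This yields $\sum_K a \equiv |K|+\sum_K a$, hence $|K|\equiv 0 \pmod 2$.
\end{itemize}
Thus the total number of nodes is even whenever all nodes accept.

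This double-counting argument for rootless components is the crux: it shows that we never need to certify acyclicity or uniqueness of the root. The same scheme, with $a(v)$ taken mod $k$, should also handle congruence $0 \bmod k$ with $2+\lceil\log k\rceil$ bits.
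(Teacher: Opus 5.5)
Your proposal is correct, and it matches the paper's proof in the scheme (mod-3 counter plus one aggregate bit, implicit min-ID parent, children recovered at radius 2) and in completeness. It also uses the same decomposition of the parent-pointer graph into rooted trees and rootless pseudo-trees, with the same leaf-to-root induction on the trees.

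The only real difference is how you handle a rootless component. The paper walks around the oriented cycle $u_1,\dots,u_s$ and shows $n_{u_i} = t_i + n_{u_{i-1}} \mod k$, where $t_i$ is the size of the tree hanging at $u_i$; it then telescopes to $\sum_i t_i = 0 \mod k$. You instead sum the local equations over the component, using that each of its vertices is the child of exactly one of its vertices. The paper's route gives more structural information: it identifies exactly what each aggregate counts, which is the ``even pseudo-trees'' picture of the introduction. Yours is shorter and never needs to describe the cycle.

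Your double counting in fact makes the decomposition superfluous. Summing the aggregate checks over all of $V$ gives $\sum_{v \in V} a(v) = n + \sum_{v \text{ not a root}} a(v) \mod 2$. Hence the sum of $a$ over the roots equals $n$ modulo $2$, and the root checks then yield $n = 0 \mod 2$ directly. This needs no tree induction and no appeal to the structure of functional graphs, and the same argument works modulo $k$.
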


The technical challenge for proving this theorem is that the only approach we know for certifying parity (beyond cycles) is to aggregate counters along the edges of a spanning tree, and a spanning tree requires $\Omega(\log n)$ bits to be encoded. 
Intuitively, when encoding a spanning tree locally, we need to allow every node to identify its parent and children. Using identifiers explicitly is of course too expensive, if we aim for constant complexity. One approach is to encode new identifiers that are only locally unique, but this is not better in graphs that have linear maximum degree.  Another approach is to use the local structure of the graph to distinguish between nodes, but this does not work in highly symmetric graphs.

\medskip

We make several observations that allow us to bypass this obstacle. 
\begin{itemize}
\item We do not need the prover to be able to encode an adversarially chosen spanning tree, \emph{any} spanning tree is fine. 
\item The nodes can distinguish (some of) their neighbors, even without certificates: for example, their neighbor with the smallest ID. 
\item Using mod-3 counters from a chosen root, we can guide the direction in which the nodes should find their parent.  
\end{itemize}

Combining these observations, we design the following strategy for the prover:
\begin{enumerate}
    \item Choose an arbitrary root. 
    \item Provide every node with its distance to the root modulo 3. 
    \item Compute the spanning tree formed by every node choosing among its neighbors closer to the root the one with the smallest ID. 
    \item Provide every node with the size of its subtree in this spanning tree, modulo 2. 
\end{enumerate}

Two challenges arise. First, by communicating with its direct neighbors, a node can easily identify which node is its parent, but not which nodes are its children. This is why we need a verification radius of 2: every node $v$ can then check whether each of its neighbors has chosen $v$ as its parent. Given this, every node can check that the parity is aggregated correctly, as in the tree case. 

The second challenge is to make the scheme robust against an adversarial prover in \emph{no}-instances. 
A blatant weakness of the strategy above is that we cannot guarantee that the parent relation, defined by the mod-3 counter and smallest-ID, is indeed acyclic. 
For example, in a cycle of size $n=3k$, the prover can assign labels 1, 2, 3, 1, 2, 3...  and pretend that these are mod-3 counters. In this case, the identifiers are irrelevant for the verification, and no node detects a cycle in the parent relation. 
The pointers should then be considered as pointers to \emph{successors}, and not as pointers to parents.
In general, the successor relation globally defines a collection \emph{pseudo-trees}: cycles with trees attached to them.  Surprisingly, we can prove that \emph{this does not compromise the correctness of the scheme}! Indeed, we can prove that if all nodes accept with our scheme, then the pointers must form a collection of pseudo-trees of even size, implying that the graph is a correct instance.   

\smallskip

We now make a few observations about the theorem and the proof.

First, even though the proof is involved, the scheme itself is very simple and highly efficient in terms of certificate size: it only needs two bits for the mod-$3$ counters, and one bit for the parity aggregates. 

Second, while the identifiers are required, the certificates only depend on the identifier assignment in an order-invariant way: if another ID assignment with the same global order were used, the same certificate assignment would still work. 
As far as we know, this is the first example of an order-invariant local certification. In the literature, either the certificates do not depend on the IDs (\emph{e.g.} for certifying $k$-colorability) or they rely on the actual value, such as when naming a leader in the graph.

\paragraph*{Lower bound in general graphs}

We now investigate whether the identifiers and verification radius 2 are necessary to achieve constant local complexity. We present the following lower bound.

\begin{theorem*}[Informal version of Theorem~\ref{thm:explicit_lower_bound_parity} in Section~\ref{sec:lower-bound}]
In anonymous graphs, with verification radius 1, the local complexity of parity is in $\Omega(\log \log^*n)$. 
\end{theorem*}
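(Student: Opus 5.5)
The plan is to build a family of anonymous yes-instances with a fixed, rigid local structure. Then, assuming certificates of $b$ bits with $b = o(\log\log^* n)$, I will use higher-order Ramsey arguments to find certified pieces that can be recombined into a graph with an odd number of nodes. Every node in the recombined graph will see exactly the same radius-1 view (its own certificate, the multiset of neighbor certificates, its degree) as in some accepting yes-instance, so it will accept.

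\textbf{Gadget family.} Since cycles, lifts and small cuts are all useless for parity, the base graphs must be dense. My choice is a hypergraph-like construction: fix a large set $V$ of $N$ "element" nodes, and for every $k$-subset $S\subseteq V$ add a "subset" node adjacent exactly to the nodes of $S$, for a constant order $k$ chosen later. Choosing which subsets are present, or attaching small pendant gadgets of odd or even size, changes the parity while keeping the radius-1 views describable. A certificate assignment is then a coloring of $k$-subsets (through the subset nodes) together with a coloring of the elements. The view of a subset node is determined by the colors of its $k$ elements, and the view of an element node is the multiset of colors of the subsets containing it.

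\textbf{Ramsey step.} By the hypergraph Ramsey theorem for $k$-uniform hypergraphs with $2^{b}$ colors, if $N \ge \mathrm{tow}_k(\mathrm{poly}(2^b))$ there is a large subset $W\subseteq V$ on which all $k$-subsets have the same color and all elements the same color. Iterating the argument over the orders $1,\dots,k$, or using the canonical/product version, homogenizes the views inside $W$. Inside this homogeneous region I will perform a local surgery: remove or duplicate one subset node, or swap a pendant gadget. The views of the affected nodes must coincide with views already present in the accepting yes-instance, and the parity of $n$ must flip. Since each element sees many subset nodes, its multiset view changes when a subset is removed. I would therefore replace the removal by a rewiring, trading a $k$-subset for another $k$-subset of identically colored nodes. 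This is where a triangle-free or odd-structure correction is needed, for instance combining two yes-instances of different sizes and swapping edges between homogeneous parts so that multisets are preserved.

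\textbf{Quantitative bound and main obstacle.} The towers in the Ramsey numbers give $n \approx \mathrm{tow}(2^{O(b)})$. To land exactly at $\Omega(\log\log^* n)$, the number of nesting levels or the uniformity must grow with $2^b$, so I expect $k$ to be around $2^b$. Then $n \le \mathrm{tow}(2^{2^{O(b)}})$, which gives $\log^* n \approx 2^{2^{O(b)}}$ and hence $b = \Omega(\log\log\log^* n)$. A tighter use with $k\sim b$ would give $\log\log^* n$, and I will tune the construction to match. The hard step is the parity-flipping surgery: rewiring an anonymous radius-1 instance so that every degree and every certificate multiset is preserved while the node count changes by an odd number. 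I would first try to glue two homogeneous blocks of sizes differing by one, so that the degrees match.
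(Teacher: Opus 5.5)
There is a genuine gap, and it is exactly the step you flag as hard. The proposal never exhibits an operation that changes the number of nodes by an odd amount while preserving every radius-$1$ view, and none of the surgeries you list can do it. Deleting or duplicating a subset node changes the degree and the neighbor multiset of its $k$ elements. Trading one $k$-subset for another keeps $n$ fixed. Gluing blocks whose sizes differ by one changes the degrees of essentially every node. Your passing idea of combining two yes-instances is the right starting point. The missing idea is to \emph{identify vertices} instead of rerouting edges. Take two copies of an accepted even instance $G$, with the same certificates in both, and merge a node $a$ of the first copy with a node $b$ of the second copy that has the same certificate. The result is connected and has $2n-1$ nodes. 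Every node except the merged one keeps its view, because its neighbor $a$ (or $b$) is replaced by a node carrying the same certificate. The merged node is adjacent to a copy of $N(a)$ and a copy of $N(b)$. So it accepts as soon as $G$ contains a node with the same certificate as $a$ and $b$ whose neighbors' certificates form the multiset union of those of $N(a)$ and $N(b)$.

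This requirement is also why a $k$-uniform gadget with hypergraph Ramsey is the wrong tool. With only $k$-subset nodes, no node has degree $2k$, so nothing can imitate the merged node. Moreover, homogeneity of the $k$-subsets of $W$ says nothing about how the certificate of a set relates to that of a union of two disjoint sets. In fact, if subset sizes are bounded, a scheme can record each node's degree (capped at the bound) in its certificate with $O(1)$ extra bits. This rules out any triple $A$, $B$, $A\cup B$ with $A,B$ disjoint sharing a certificate. The paper therefore takes subset nodes for \emph{all} non-empty subsets of $[m]$, with $m$ odd so that $n=m+2^m-1$ is even. It then applies the Folkman--Rado--Sanders finite unions theorem with two sets. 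If $m\ge f(2,c)$, where $c$ is the number of certificates, there are disjoint non-empty $A,B$ such that $y_A$, $y_B$ and $y_{A\cup B}$ share a certificate. Merging $y_A$ in the first copy with $y_B$ in the second produces a node whose view is exactly that of $y_{A\cup B}$. Your quantitative claim is not established either, since your own estimate yields only $\log\log\log^\ast n$. The paper obtains $\Omega(\log\log^\ast n)$ bits from Setyawan's bound ($f(2,c)$ is at most a tower of height $2c$), combined with $\log^\ast n=\Theta(\log^\ast m)$ and the lemma that the diagonal inverse tower function is $\Theta(\log^\ast)$. Together these force $\Omega(\log^\ast n)$ distinct certificates.
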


As previously discussed, the standard lower bound techniques fail for parity. At a high level, we bypass the obstacles of these techniques by: (1) using more complex graph structures, (2) aiming at changing node degrees without causing rejection, and (3) replacing the classic counting argument by a higher-order Ramsey-type theorem.  

The general strategy is simple. We begin with a family of graphs, each with an even number of nodes. We show that, for any possible scheme using short certificates, and for any graph $G$ of this family, there exist two nodes $a$ and $b$, that have the following property. 
If we take two copies of $G$ and identify the vertex $a$ from the first copy with the vertex $b$ from the second copy, all the vertices accept the resulting graph.
This is a contradiction, because the new graph has $2|G|-1$ nodes, which is odd. 
We emphasize that in this reasoning we combine two instances by identifying vertices, rather than by rerouting edges as is usually done. 

The graph family we use consists of ``powerset graphs''. A graph $G$ in this family is a bipartite graph $G=(V_1 \cup V_2, E)$, where $V_1$ has $m$ nodes (with $m$ odd), and $V_2$ has $2^m-1$ nodes. For every non-empty subset of $V_1$, there exists exactly one node of $V_2$ that is linked to exactly this subset. 
See Figure~\ref{fig:powerset-graph}.

\begin{figure}[!h]
    \centering
    \scalebox{0.9}{
    \tikzset{every picture/.style={line width=0.75pt}}

\begin{tikzpicture}[x=0.75pt,y=0.75pt,yscale=-1,xscale=1]

% Box and label for V1 (top row)
\draw [line width=0.5pt, color=gray, rounded corners=5pt] (170,120) rectangle (400,152);
\node at (170,120) [above left, xshift=2pt, yshift=-2pt, color=black] {$V_1$};

% Circles for V1 (top row) - smaller and filled with black
\foreach \x in {194.02, 285.98, 374.02} {
    \draw [fill=black] (\x,135.98) circle (3);
}

% Box and label for V2 (bottom row)
\draw [line width=0.5pt, color=gray, rounded corners=5pt] (110,200) rectangle (460,232);
\node at (110,200) [above left, xshift=2pt, yshift=-2pt, color=black] {$V_2$};

% Circles for V2 (bottom row) - smaller and filled with black
\foreach \x in {134.02, 185.98, 235.98, 285.98, 337.95, 387.95, 435.98} {
    \draw [fill=black] (\x,215.98) circle (3);
}

% Lines from V1 (top) to V2 (bottom) - edges
\draw (194.02,135.98) -- (134.02,215.98);
\draw (194.02,135.98) -- (285.98,215.98);
\draw (194.02,135.98) -- (337.95,215.98);
\draw (194.02,135.98) -- (435.98,215.98);

\draw (285.98,135.98) -- (185.98,215.98);
\draw (285.98,135.98) -- (285.98,215.98);
\draw (285.98,135.98) -- (387.95,215.98);
\draw (285.98,135.98) -- (435.98,215.98);

\draw (374.02,135.98) -- (235.98,215.98);
\draw (374.02,135.98) -- (337.95,215.98);
\draw (374.02,135.98) -- (387.95,215.98);
\draw (374.02,135.98) -- (435.98,215.98);

\end{tikzpicture}}
    \caption{A graph $G$ from our lower bound instance family. Here $m=3$, $2^m -1=7$ and $n=10$.}
    \label{fig:powerset-graph}
\end{figure}

We use the \emph{finite unions theorem}~\cite[Chapter~3]{Graham91}, a powerful tool from Ramsey theory about colorings of set families, to prove the following. 
In our lower bound graphs, for any certificate assignment using $o(\log \log^*n)$ bits per node,  there exist three nodes $a$, $b$ and $c$ in $V_2$, such that: (1) they are given the same certificate, and (2) they are adjacent to sets $A$, $B$ and $C$ respectively, such that $A \cap B =\emptyset$ and $C=A\cup B$.\footnote{We can avoid using the finite union theorem, by relying on more standard Ramsey theorems and additive combinatorics results, but this only yields an $\omega(1)$ bound rather than $\Omega(\log \log^* n)$.}

Now, as announced, we take two copies of $G$, with the same accepting certificate assignment, and identify vertex $a$ from the first copy with vertex $b$ from the second copy. 
This certificate assignment is well-defined because $a$ and $b$ have the same certificate. It is also accepting: all the nodes have the same view as in $G$, except the merged vertex, which has exactly the same view as vertex $c$ in $G$, and therefore also accepts. Note that for this part of the argument to hold, we crucially need the verification radius to be one, and the network to be anonymous.

\paragraph*{Upper bound in structured sparse graphs}

Our lower bound graphs have complex structures, in the sense that many standard parameters are unbounded in this graph family, for example the maximum degree, the treewidth and even the VC dimension. 
We investigate whether this is an artifact of our proof, or whether it is unavoidable. A concrete question is: Can we still get super-constant lower bounds if we bound some parameters or restrict to well-known graph classes, such as planar graphs? 

We show that this complexity is necessary, by proving an upper bound in a very general family of structured graphs.  

\begin{theorem*}[Informal version of Theorem~\ref{thm:UB_parity_bounded_expansion} in Section~\ref{sec:upper-bounded-expansion}]
In any family of anonymous graphs of bounded expansion, parity has constant local complexity, even with verification radius 1.
\end{theorem*}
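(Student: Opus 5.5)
The plan is to reuse the scheme of Theorem~\ref{thm:upper-bound-general-graphs}, which aggregates parity along a spanning tree, and to replace the identifier-based choice of parent by a constant number of colors. Each certificate will contain four things: a mod-3 counter $c(v)$ (the depth of $v$ in a rooted spanning tree $T$), a parity aggregate $s(v)$, a color $\chi(v)$ from a constant-size palette, and the color $\pi(v)=\chi(p(v))$ of the parent $p(v)$ of $v$ in $T$. The requirement on $\chi$ is the following: for every non-root $v$, $p(v)$ is the \emph{unique} neighbor of $v$ with color $\pi(v)$. In other words, $\chi$ is conflict-free on neighborhoods, but only with respect to the designated parent.

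\textbf{Local checks at radius 1.}
\begin{itemize}
\item Each non-root $v$ checks that it has exactly one neighbor $u$ with $\chi(u)=\pi(v)$, and that $c(u)=c(v)-1 \bmod 3$. A node with no such pointer declares itself a root.
\item A node $u$ decides that a neighbor $w$ is its child iff $\pi(w)=\chi(u)$ and $c(w)=c(u)+1 \bmod 3$. Because the parent's color is unique in $N(w)$, this test is exactly equivalent to ``$w$ points to $u$''. So $u$ learns its children without looking at distance 2.
\item Each node checks $s(u)=1+\sum_{\text{children } w} s(w) \bmod 2$, and a root checks $s=0$.
\end{itemize}
In a no-instance where everyone accepts, the pointer relation is a functional graph, i.e.\ a collection of pseudo-trees. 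I then plan to invoke the argument already used for Theorem~\ref{thm:upper-bound-general-graphs}: accepting parity aggregates on a pseudo-forest whose successor pointers respect the mod-3 counters force each component to have even size. For cycle components this uses that the cycle length is $0 \bmod 3$ and that the aggregates sum consistently around it. I would verify that this argument uses only the successor/child relation and not the identifiers, so it transfers verbatim.

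\textbf{Main obstacle: existence of $T$ and $\chi$ with constantly many colors.} This is the only place where bounded expansion is used.
\begin{enumerate}
\item Take any BFS tree $T$ from an arbitrary root.
\item Build an orientation of $G$: orient every tree edge from parent to child, and orient the non-tree edges using a degeneracy ordering. Bounded expansion implies bounded degeneracy $d$, so every in-degree is at most $d+1$.
\item Form the graph $G'$ obtained from $G$ by one step of transitive fraternal augmentation. Add an edge $xy$ whenever $x\to z\leftarrow y$ (fraternal) or $x\to z\to y$ (transitive). By the Ne\v{s}et\v{r}il--Ossona de Mendez theory, $G'$ again has bounded expansion, with bounds depending only on the class and on $d$. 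Hence $G'$ has bounded chromatic number, and I take $\chi$ to be a proper coloring of $G'$.
\item Check the conflict-free property. Let $v$ be a node and $u\neq p(v)$ a neighbor of $v$; we have $p(v)\to v$. If $u\to v$, then $p(v)u$ is a fraternal edge. If $v\to u$, then $p(v)u$ is a transitive edge. Either way $\chi(u)\neq\chi(p(v))$.
\end{enumerate}
Hence the total certificate size is $2+1+2\lceil\log \chi(G')\rceil = O(1)$.

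I expect the delicate points to be, first, making sure the cited augmentation result applies to an orientation that is not itself a degeneracy orientation; it only needs bounded in-degree, which we have. Second, the soundness transfer must rely solely on the child-detection rule above.
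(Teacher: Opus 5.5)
Your proof is correct, but it takes a different route from the paper at the key existence step.

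\textbf{The paper's route.} The paper never fixes the spanning tree in advance. For each $d$ it takes the bipartite graph $G_d$ between BFS layers $d-1$ and $d$, and applies Hickingbotham's theorem (bounded conflict-free chromatic number, used on these subgraphs) to get a conflict-free coloring $\varphi_d$. Only then does each node at depth $d$ pick as parent a previous-layer neighbor whose $\varphi_d$-color is unique among its previous-layer neighbors. A certificate stores the node's own color in $\varphi_{d_u+1}$ and its parent's color in $\varphi_{d_u}$. There the mod-3 counters are indispensable, because uniqueness holds only inside the previous layer and the coloring changes from layer to layer.

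\textbf{Your route.} You fix an arbitrary BFS tree first and build a single coloring in which $\chi(p(v))$ is unique in all of $N(v)$. This coloring is a proper coloring of a one-step transitive fraternal augmentation of a bounded in-degree orientation. Your case split is right: $u\to v$ gives a fraternal edge and $v\to u$ a transitive one. The augmentation lemma indeed needs only bounded in-degree, not acyclicity. For one step you only need the augmented graph to have bounded degeneracy, which follows from the in-degree bound and the sparsity of shallow minors of $G$.

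\textbf{Comparison.} Your argument directly generalizes the distance-2 coloring of Proposition~\ref{prop:UB_parity_bounded_degree}: it separates the parent from the rest of the neighborhood, rather than all neighbors pairwise. It lets the prover encode an arbitrary prescribed spanning tree. It also makes the mod-3 counters superfluous. Since an accepting $w$ has a unique neighbor of color $\pi(w)$, the test $\pi(w)=\chi(u)$ alone identifies children, and soundness goes through exactly as in Proposition~\ref{prop:UB_parity_bounded_degree}. The paper's route instead gets a single black-box citation whose conclusion is precisely the ``uniquely colored neighbor'' property, with no orientation or augmentation machinery.

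\textbf{One inaccuracy.} The cycle case of the soundness argument does not use that the cycle length is $0 \bmod 3$. It only uses that the aggregates telescope around the cycle, which is exactly why the counters can be dropped in your scheme.
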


Bounded expansion basically ensures that a graph class is \emph{structurally} sparse. In contrast, the standard notion of sparsity is coarse, as it only requires a global condition: a bounded ratio of edges to vertices. 
Under this definition, even dense graphs can be made sparse by simply attaching a sufficiently long path, or subdividing their edges. 
In many cases, this basic definition of sparsity is not demanding enough to be useful, and bounded expansion is an essential alternative.
Intuitively, a family of graphs has bounded expansion if it is sparse and cannot be made arbitrarily dense by contracting small subgraphs. 
A bit more formally, a family of graphs has bounded expansion if, for any $t$, the density remains bounded by some function $f(t)$, after contracting subgraphs of radius at most $t$ and deleting vertices. See Section~\ref{sec:upper-bounded-expansion} for proper definitions.

Bounded expansion graphs encompass many standard graph classes such as bounded degree graphs, random graphs of bounded average degree (with high probability), bounded treewidth graphs, planar graphs, and more generally minor-free graphs. 
These classes have proven interesting from a local certification perspective: for example, there exists constant-size certifications for perfect matchings in planar graphs~\cite{BousquetFZ25-SIDMA} and leader election in grids~\cite{FeuilloleySS26, ChalopinCK26}. 
A recent breakthrough establishes an efficient meta-theorem in bounded expansion graphs in the CONGEST model, and the same tools enable local certification of all FO formulas with local complexity $O(\log n)$~\cite{BlinFFGGMRT26}. This result is incomparable with ours, since we use constant-size certificates and parity is not in FO. Nevertheless it highlights that this structural notion of sparsity is central for local certification, and distributed graph algorithms in general. 

\medskip

We turn to our proof techniques. 
We build on the ideas behind the scheme for general graphs. 
Recall that our upper bound relied on two key assumptions: unique identifiers (to identify a successor) and verification radius 2 (to identify the predecessor/children). We prove that we can leverage the structure of bounded expansion graph classes to certify parity without these assumptions. 

The main tool is \emph{conflict-free colorings}. These are colorings of the nodes of the graph, such that for every node, some color is used exactly once in its neighborhood. That is, for every node $v$, there exists a neighbor $w$ with some color $c$, such that no other neighbor of $v$ has color $c$. 
It was proved recently that for bounded expansion graph classes we can get conflict-free colorings with a constant number of colors~\cite{Hickingbotham23}. 
The basic idea is as follows: the prover computes a conflict-free coloring of the graph, and provides each node with its color, and the color of its successor. This is unambiguous because there is a unique node with this color in the neighborhood. 
This alone is insufficient to ensure correct parity aggregates, but we can enhance it by using mod-3 counters, and a collection of conflict-free colorings taking the mod-3 layers into account.
The proof then follows the general case: we prove that a graph is accepted if and only if it has a spanning collection of even pseudo-trees, which is equivalent to parity. 

Although conflict-free colorings have been extensively studied in graph theory, we are unaware of any prior application in distributed computing.

%%%%%%%%%%%%%%%%%%%%%%%
\subsection{Discussions}
%%%%%%%%%%%%%%%%%%%%%%%%

\paragraph*{A surprising discrepancy in the complexities}

Our first two theorems show that in two rather close models of local certification the local complexity can be different. We consider the two models to be close because they follow the standard framework, unlike more recent generalizations using multiple provers or a non-constant number communication rounds. We now elaborate on why this discrepancy is surprising. 

We now describe how the local complexity depends on the verification radius and the identifier model in previous work.
Regarding the verification radius, two behaviors have been identified.
\begin{enumerate}
    \item For most classic properties, increasing the verification radius from $1$ to an arbitrary constant only decreases the certificate size by a constant factor. This phenomenon was studied in~\cite{GoosS16} by introducing locally checkable proofs, as a generalization of the original model of proof-labeling schemes~\cite{KormanKP10}.\footnote{It is conjectured that the certificate size always decreases at least linearly with the verification radius. This known as the trade-off conjecture, raised several times in the literature, and a slightly weaker version proved very recently~\cite{FiltserF26}.}
    \item For some properties, increasing the verification radius eventually makes certification trivial, and this threshold can be deduced directly from the definition of the property. For example, certifying that a graph is triangle-free requires near-linear certificates at distance $1$, but is trivial at distance 2. See~\cite{BousquetCFPZ24} for a detailed study of such problems.
\end{enumerate}

We also observe a dichotomy for the role of the identifiers. 
\begin{enumerate}
    \item Most classic properties, such as leader election or clique detection, cannot be solved without identifiers for any constant verification radius.
    \item For some properties, the natural certification does not use identifiers, for example $k$-colorability.
\end{enumerate}

With parity, we encounter in a new situation: a change of verification radius and identifier model alters the local  complexity, for no obvious reason. Indeed, it is not clear why parity exhibits a fundamental difference between radius 1 and radius 2, and why identifiers would be required, given that the problem does not rely on a node having a specific role. 

\paragraph*{The problems we leave open}

Since our first two theorems differ in both the verification radius and the identifier model, a natural question arises: what about the intermediate models?  

\begin{open}
    Is it possible to certify parity with constant local complexity, in anonymous graphs with verification radius 2, or in identified graphs with verification radius 1?
\end{open}

We note that the simplest adaptations of our techniques do not work: subdividing edges in the lower bound breaks the constructions, and replacing IDs with pseudo-IDs is challenging without assumptions on the structure of the graph.  

Even within the models of this paper, large gaps in our understanding persist. 
The local complexity in the anonymous case with radius $1$ remains wide open. 
An upper bound in this setting can be obtained by using a conflict-free coloring (as in our third contribution), but this requires $O(\log n)$ bits in general graphs, as up to $n$ colors are needed. 
Since our lower bound is just above constant, the gap remains very large. 

\begin{open}
    What is the local complexity of parity in anonymous graphs with verification radius 1?
\end{open}

It is possible that an $\Omega(\log n)$ lower bound can be achieved, but this would require fundamentally new ideas. Indeed, in our lower bound, we crucially need some nodes of different degree to be assigned the same certificate, and this cannot be guaranteed in our lower bound instances if the certificates have size $\Omega(\log \log n)$. This is because the nodes that are on the side $V_2$ (where there is one node for every subset of side $V_1$) 
have degree $m = O(\log n)$, and writing the degree of each such node in its certificate only takes $O(\log \log n)$ bits. 

\paragraph*{On the power of constant-size local certification}

Let us now shift our focus slightly and discuss the power of constant-size certification in anonymous graphs with verification radius 1. 
It is known that the languages of paths (with inputs) that have constant local complexity in this setting are exactly the regular languages~\cite{BousquetFZ25-DISC}. 
And since counting modulo a constant is straightforward for a finite automaton, our lower bound implies that when moving from paths to general graphs, we lose the ability to count.
We wonder whether the properties with constant local complexity in general graphs correspond to restrictions of regular languages. 
For example, we could explore analogues of locally testable languages (regular languages defined by forbidden subwords), or star-free (regular languages that can be expressed without Kleene star). However this approach is challenging: the typical example of a property of local complexity is 3-colorability, and 3-colorable graphs can have highly complex structures. 

\begin{open}
    Can we characterize the properties that have constant local complexity without referring to local certification? In general graphs, or even in specific graph classes? 
\end{open}

\paragraph*{Similarities with LCL theory}

Finally, readers familiar with the theory of locally checkable labelings will recognize similarities between some concepts of this paper and the $O(\log^*n)$ regime of complexity in LOCAL. 
First, for the upper bound in general graphs we use a scheme that is order-invariant, which is reminiscent of Naor-Stockmeyer result~\cite{NaorS95} which shows that any constant-time algorithm in the LOCAL model can be made order-invariant. 
Second, the $\log^*n$ complexity of our lower bound is the same as the complexity of symmetry breaking (\emph{e.g.} 3-coloring a cycle). 
We do not know whether a deep connection exists between these two branches of distributed computing, but we highlight that the graphs used for the lower bound \emph{do not have bounded degree}, which implies that easy adaptations are not expected. 

\subsection{Organization of the paper}

The rest of the paper is organized as follows. In Section~\ref{sec:model}, we provide the formal definitions of the models used in the paper. 
In Section~\ref{sec:upper-bound-general-graphs} and~\ref{sec:upper-bounded-expansion}, we state and prove our upper bounds for general graphs and bounded expansion graph classes, respectively. 
In Section~\ref{sec:lower-bound}, we state and prove our lower bound. Note that the order of the sections differs from the order in which discussed the results in the introduction. 

%%%%%%%%%%%%%%%%%%%%%%%%%%%%
\section{Model and definitions}
\label{sec:model}
%%%%%%%%%%%%%%%%%%%%%%%%%%%%

\paragraph*{Graph notions}
All the graphs we consider in this paper are simple, loopless, finite, undirected, and connected. If~$G$ is a graph, we denote by~$V(G)$ and~$E(G)$ its sets of vertices and edges respectively; we also use~$V$ and~$E$ when $G$ is clear from the context.
We denote by~$n$ the number of vertices. 

\paragraph*{Models of local certification.} Let $G = (V,E)$ be an $n$-vertex graph, and $c > 0$. 
An \emph{identifier assignment} of~$G$ in the range~$[n^c]$ is an \emph{injective} function $\mbox{ID} : V(G) \to [n^c]$. If~$G$ is equipped with an identifier assignment, we say that we are in the \emph{locally checkable proof model} (or \emph{LCP model} for short), else we say that we are in the \emph{anonymous model}. Finally, let~$C$ be a non-empty set, called the set of \emph{certificates}. A~\emph{certificate assignment} of~$G$ with certificates in~$C$ is a mapping~$c : V \to C$.

\paragraph*{Local view and verification algorithm} Let $G=(V,E)$ be a graph, $r \geqslant 1$, and $c$ be a certificate assignment for~$G$. Let $u \in V$. The \emph{local view of~$u$ at distance~$r$} consists in all the information available at distance at most~$r$ from~$u$ in~$G$, that is:

\begin{itemize}
	\item its own certificate and its own identifier (if any);
	\item all the vertices at distance at most~$r$ from~$u$, and all the edges having at least one endpoint at distance at most~$r-1$ from~$u$;
	\item the restriction of the certificate assignment~$c$ and of the identifier assignment (if any) to these vertices.
\end{itemize}	

A \emph{verification algorithm} (or \emph{verification procedure}) is a function taking as input the local view at distance~$r$ of a vertex, and outputting a decision, \emph{accept} or \emph{reject}. The integer~$r$ is called the~\emph{verification radius} of the algorithm. If the verification radius is equal to~$r$, we also say that we are in the~\emph{$r$-local model}.

\paragraph*{Certification scheme}
Let $\C$ be a class of graphs and $\P$ be a property on graphs in~$\C$.
Let $s : \N \to \N$. We say that there exists a \emph{certification scheme for~$\P$} with \emph{local complexity}~$s$ if there exists a verification algorithm such that the two following conditions are satisfied:

\begin{itemize}
	\item \emph{Completeness}: For every $n$-vertex graph~$G \in \C$ that satisfies~$\P$, and for every identifier assignment of $G$ (if we are in the locally checkable proof model), there exists a certificate assignment in $[2^{s(n)}]$ such that the verification of every vertex accepts (we say that the graph is \emph{globally accepted}).
	\item \emph{Soundness}: For every graph~$G \in \C$ that does not satisfy~$\P$, for every identifier assignment of~$G$ (if we are in the locally checkable proof model), for every $k \in \N$ and every certificate assignment in $\{0, \ldots, 2^{k}-1\}$, at least one vertex rejects.
\end{itemize}

We often say that the certificates are assigned by an external \emph{prover}.
Let us emphasize that, if $G$ does not satisfy~$\P$, then for any assignment of certificates \emph{of any size}, at least one vertex rejects.
Let us also point out the fact that in a certification scheme for a property~$\P$ in some class~$\C$, the vertices have the promise that the graph belongs to~$\C$. In other words, the certification scheme depends on the property~$\P$ \emph{and} on the class~$\C$, and we are not concerned by the output of the verification procedure of the vertices in graphs that do not belong to~$\C$.

\paragraph*{Computational power of the prover and vertices}
Note that, in the previous definitions, there is no limitation on the computational power of the prover and the vertices. Namely, the verification of the vertices is a just a function, with no more requirements (in particular, it is not necessarily computable in polynomial time, or even decidable).
This definition only makes stronger the lower bounds on the local complexity, by showing that they do not come from computational limits. However, when designing certification schemes in practice, it will often turn out to be computable in polynomial time.

\paragraph*{Discussion on the models} In general, it is easier to certify a property in the locally checkable proof model than in the anonymous model. Indeed, proving an upper bound on the local complexity of some property in the anonymous model also proves it in the LCP model, simply because a certification scheme in the anonymous model also works in the LCP model. Similarly, a lower bound in the LCP model automatically gives a lower bound in the anonymous model.

%%%%%%%%%%%%%%%%%%%%%%%
\section{Upper bound in general graphs}
\label{sec:upper-bound-general-graphs}
%%%%%%%%%%%%%%%%%%%%%%%

In this section, we establish the constant local complexity of (generalizations of) parity in the model with verification radius 2 and identifiers. Let us start by a couple of definitions, and then state and prove our theorem. 

\begin{definition}
	The \emph{underlying graph} of an oriented graph~$G$ is the \mbox{non-oriented} graph obtained from~$G$ by forgetting the orientation of the edges.
\end{definition}

\begin{definition}
	An~\emph{oriented cycle} is an oriented graph whose underlying graph is a cycle, and such that each vertex has exactly one incoming and one outgoing edge.
\end{definition}

\begin{theorem}
\label{thm:upper-bound-general-graphs}
	Let $k \geqslant 2$, and let $\P$ be the property of having a number of vertices divisible by~$k$. In the $2$-local LCP model, the local complexity of~$\P$ is~$O(\log k)$.
\end{theorem}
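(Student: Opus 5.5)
We follow the strategy sketched in the introduction: each certificate has two parts, a mod-3 counter $d(v)\in\{0,1,2\}$ and an aggregate $s(v)\in\{0,\dots,k-1\}$, for a total of $2+\lceil\log k\rceil=O(\log k)$ bits. Given the certificates, every vertex $v$ defines its \emph{successor} $\sigma(v)$ as the neighbor of smallest identifier among those $w$ with $d(w)=d(v)-1 \bmod 3$, if one exists. We need radius $2$ so that $v$ can compute $\sigma(w)$ for every neighbor $w$. This follows since $\sigma(w)$ depends only on the neighbors of $w$, their identifiers and their counters, all of which lie within distance $2$ of $v$. Hence $v$ knows its set of predecessors $P(v)=\{w\in N(v):\sigma(w)=v\}$.

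The verification at $v$ is as follows. If $\sigma(v)$ is undefined, then $v$ declares itself a root: it checks that $s(v)=1+\sum_{w\in P(v)} s(w) \bmod k$ and that $s(v)\equiv 0 \bmod k$. Otherwise $v$ only checks $s(v)=1+\sum_{w\in P(v)} s(w)\bmod k$. Completeness is then routine. On a yes-instance the prover picks a root $r$ and sets $d(v)=\mathrm{dist}(v,r)\bmod 3$. Every $v\neq r$ has a neighbor at distance one less, and any neighbor $w$ with $d(w)\equiv d(v)-1$ must have $\mathrm{dist}(w,r)=\mathrm{dist}(v,r)-1$, since neighbor distances differ by at most $1$. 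So $\sigma$ is defined exactly on $V\setminus\{r\}$ and strictly decreases the distance. The successor relation is therefore a spanning tree rooted at $r$. Taking $s(v)$ to be the subtree size mod $k$, all checks pass, and at the root $s(r)=n\equiv 0$.

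Soundness is the key step. Suppose all vertices accept under some certificate assignment. The functional graph of $\sigma$, with an edge from each $v$ to $\sigma(v)$, partitions $V$ into components. Each component has out-degree at most $1$ everywhere, so it is either an in-tree rooted at a vertex with undefined $\sigma$, or a pseudo-tree consisting of one directed cycle with in-trees hanging on it. For a tree component, summing the local equalities bottom-up shows that $s(\text{root})$ equals the component size mod $k$, and the root checks that this is $0$.

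For a pseudo-tree component with cycle $C$, the equalities at off-cycle vertices again give subtree sizes mod $k$. At each cycle vertex $c_i$ we get $s(c_i)=1+T_i+s(c_{i-1})$, where $T_i$ collects the hanging subtrees and $c_{i-1}$ is the predecessor of $c_i$ on the cycle. Summing around the cycle, the $s$ terms cancel, since each $s(c_i)$ appears exactly once on each side. This gives $0\equiv \sum_i (1+T_i)$, which is exactly the size of the component. So every component has size divisible by $k$, and since the components partition $V$, we conclude $k\mid n$.

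The obstacle I expect is precisely this pseudo-tree case. At first it looks like the mod-3 counters must be shown to force acyclicity, which is false, since labels $0,1,2$ repeating around a cycle are accepted. The telescoping argument above avoids needing acyclicity. I would also double-check two points. First, $P(v)$ is computed consistently, because every vertex applies the same rule to the same radius-2 data. Second, vertices whose $\sigma$ is undefined act as roots, and they are correctly constrained by the check $s\equiv 0$.
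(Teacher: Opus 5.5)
Your proposal is correct and matches the paper's proof step for step. It uses the same certificates (distance mod $3$ and subtree size mod $k$) and the same minimum-ID parent rule, with radius $2$ used to recover children. Soundness is argued the same way, by splitting the components of the parent relation into rooted trees and pseudo-trees and telescoping around the cycle in the latter case; the only cosmetic difference is that the paper tells the two cases apart by counting edges ($|C|-1$ versus $|C|$), whereas you read the structure directly off the functional graph.
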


\begin{proof}
	Let us describe a $2$-local certification scheme for~$\P$, in the locally checkable proof model.
	Let~$G$ be a graph. Let us first describe how the prover assigns the certificates on a correct instance. First, it chooses an arbitrary vertex~$r \in V(G)$, and computes the following spanning tree~$\T$ rooted at~$r$: for every vertex~$u \neq r$ at distance~$d$ from~$r$, its parent in~$\T$ is its neighbor at distance~$d-1$ with minimum identifier. Note that this defines indeed a spanning tree (because~$\T$ is a particular breadth-first-search tree).
	Finally, the prover writes, in the certificate of each vertex~$u$, the following information (which can be encoded on~$O(\log k)$ bits):
    
	\begin{enumerate}[(C1)]
		\item the distance~$d_u$ from~$u$ to~$r$ modulo~$3$; \label{item:C1_parity}
		\item the number of vertices~$n_u$ in the subtree of~$\T$ rooted at~$u$ modulo~$k$. \label{item:C2_parity}
	\end{enumerate}
	
	Let us now describe the verification procedure of the vertices. %For every vertex \mbox{$u \in V(G)$}, we denote respectively by~$ u$ and~$n_u$ the values of the distance modulo~3 from~$u$ to~$r$, and of the number of vertices modulo~$k$ in the subtree rooted at~$u$, written by the prover in the certificate of~$u$.
	Consider a certificate assignment~$c$ to the vertices (that is, some assignment of values~$d_u$ and~$n_u$).
	If the set~$\{v \in N(u) \; | \; d_v = d_u-1 \mod 3\}$ is empty, we say that~$u$ is a \emph{root} (with respect to the certificate assignment~$c$). If this set is non-empty, we define the~\emph{parent} of~$u$ (again, with respect to the certificate assignment~$c$) as the vertex with minimum identifier in this set. If~$v$ is the parent of~$u$, we also say that~$u$ is a~\emph{child} of~$v$. If~$u$ has no child, we say that it is a~\emph{leaf}. Note that, since each vertex~$u$ has a view at distance~$2$, it can determine, for every neighbor~$v$, if it is itself the parent of~$v$ (because it sees the certificates and identifiers of the neighbors of~$v$). Thus, it can determine its set of children. The verification of every vertex~$u$ consists in the following, by rejecting if at least one condition is not satisfied, and accepting if both conditions are satisfied:

	\begin{enumerate}[(V1)]
		\item $u$ checks that $n_u = 1 + \sum_{v \in S} n_v \mod k$, where~$S$ denotes its set of children; \label{item:V1_parity}
		\item if $u$ is a root, it checks that~$n_u = 0 \mod k$. \label{item:V2_parity}
	\end{enumerate}
	
	The completeness of this scheme is straightforward: indeed, if the number of vertices in~$G$ is divisible by~$k$, and if the prover assigns the certificates as described in~\ref{item:C1_parity} and~\ref{item:C2_parity}, no vertex will reject in its verification procedure.
	
	Let us prove the soundness of this scheme.
	Let~$c$ be a certificate assignment such that no vertex rejects, and let us prove that~$G$ satisfies~$\P$. In fact, we will not prove that the parent relationship with respect to~$c$ defines a correct spanning tree, because it might not be the case. However, even if this parent relationship does not define a correct spanning tree, we will still be able to prove that the property~$\P$ holds.
	Let us consider the following oriented graph~$\overrightarrow{G}$, on the same set of vertices as~$G$. For every edge~$\{u,v\}$ of~$G$, we put the edge~$u \rightarrow v$ in~$\overrightarrow{G}$ if and only if $v$ is the parent of~$u$ with respect to~$c$. Note that, in~$\overrightarrow{G}$, each vertex has at most one outgoing edge, and it does have one if and only if it is not a root.
	Let us consider a connected component~$C$ of the underlying graph of~$\overrightarrow{G}$.
	Let us denote by~$|C|$ its number of vertices. We aim to prove that~$|C|$ is divisible by~$k$.
	Since each vertex of~$\overrightarrow{G}$ has at most one outgoing edge, then~$C$ has at most~$|C|$ edges in~$\overrightarrow{G}$. Moreover, since~$C$ is connected in the underlying graph of~$\overrightarrow{G}$, it has at least~$|C|-1$ edges. We say that~$C$ is a~\emph{type-$1$ component} if it has~$|C|-1$ edges, and a~\emph{type-$2$ component} if it has~$|C|$ edges (see Figure~\ref{fig:parity_types} for an illustration). We now treat these two cases separately.
	
	\begin{figure}[h!]
		\centering
		
		\begin{tikzpicture}[x=0.55pt,y=0.55pt,yscale=-1,xscale=1]
			%uncomment if require: \path (0,456); %set diagram left start at 0, and has height of 456
			
			%Straight Lines [id:da6798167179245248] 
			\draw    (382.57,126.77) -- (377.27,169.72) ;
			\draw [shift={(376.91,172.69)}, rotate = 277.04] [fill={rgb, 255:red, 0; green, 0; blue, 0 }  ]   [draw opacity=0] (8.93,-4.29) -- (0,0) -- (8.93,4.29) -- cycle    ;
			%Straight Lines [id:da28344864220342136] 
			\draw    (375.82,178.39) -- (414.98,208.61) ;
			\draw [shift={(417.35,210.45)}, rotate = 217.66] [fill={rgb, 255:red, 0; green, 0; blue, 0 }  ]   [draw opacity=0] (8.93,-4.29) -- (0,0) -- (8.93,4.29) -- cycle    ;
			%Straight Lines [id:da4134068988300049] 
			\draw    (421.29,214.71) -- (464.41,203.44) ;
			\draw [shift={(467.31,202.68)}, rotate = 165.36] [fill={rgb, 255:red, 0; green, 0; blue, 0 }  ]   [draw opacity=0] (8.93,-4.29) -- (0,0) -- (8.93,4.29) -- cycle    ;
			%Straight Lines [id:da8220560335231119] 
			\draw    (472.95,201.34) -- (479.49,150.53) ;
			\draw [shift={(479.88,147.55)}, rotate = 97.34] [fill={rgb, 255:red, 0; green, 0; blue, 0 }  ]   [draw opacity=0] (8.93,-4.29) -- (0,0) -- (8.93,4.29) -- cycle    ;
			%Straight Lines [id:da44873072247261103] 
			\draw    (480.61,141.8) -- (444.77,110.65) ;
			\draw [shift={(442.51,108.68)}, rotate = 40.99] [fill={rgb, 255:red, 0; green, 0; blue, 0 }  ]   [draw opacity=0] (8.93,-4.29) -- (0,0) -- (8.93,4.29) -- cycle    ;
			%Straight Lines [id:da9974837322267381] 
			\draw    (438.4,104.58) -- (390.87,123.84) ;
			\draw [shift={(388.08,124.97)}, rotate = 337.95] [fill={rgb, 255:red, 0; green, 0; blue, 0 }  ]   [draw opacity=0] (8.93,-4.29) -- (0,0) -- (8.93,4.29) -- cycle    ;
			%Straight Lines [id:da4327526530726231] 
			\draw    (446.37,61.01) -- (440.08,95.94) ;
			\draw [shift={(439.55,98.9)}, rotate = 280.21] [fill={rgb, 255:red, 0; green, 0; blue, 0 }  ]   [draw opacity=0] (8.93,-4.29) -- (0,0) -- (8.93,4.29) -- cycle    ;
			%Straight Lines [id:da8869214984913414] 
			\draw    (508.69,246.07) -- (478.56,208.11) ;
			\draw [shift={(476.7,205.76)}, rotate = 51.56] [fill={rgb, 255:red, 0; green, 0; blue, 0 }  ]   [draw opacity=0] (8.93,-4.29) -- (0,0) -- (8.93,4.29) -- cycle    ;
			%Straight Lines [id:da09600167191956466] 
			\draw    (531.37,210.01) -- (481.68,202.47) ;
			\draw [shift={(478.71,202.02)}, rotate = 8.62] [fill={rgb, 255:red, 0; green, 0; blue, 0 }  ]   [draw opacity=0] (8.93,-4.29) -- (0,0) -- (8.93,4.29) -- cycle    ;
			%Straight Lines [id:da40730684076733426] 
			\draw    (291.37,238.01) -- (323.23,212.13) ;
			\draw [shift={(325.56,210.24)}, rotate = 140.91] [fill={rgb, 255:red, 0; green, 0; blue, 0 }  ]   [draw opacity=0] (8.93,-4.29) -- (0,0) -- (8.93,4.29) -- cycle    ;
			%Straight Lines [id:da9847178993896005] 
			\draw    (330.37,207.01) -- (368.1,182.59) ;
			\draw [shift={(370.62,180.96)}, rotate = 147.09] [fill={rgb, 255:red, 0; green, 0; blue, 0 }  ]   [draw opacity=0] (8.93,-4.29) -- (0,0) -- (8.93,4.29) -- cycle    ;
			%Straight Lines [id:da3679500627557788] 
			\draw    (503.37,87.01) -- (484.29,133.81) ;
			\draw [shift={(483.16,136.59)}, rotate = 292.18] [fill={rgb, 255:red, 0; green, 0; blue, 0 }  ]   [draw opacity=0] (8.93,-4.29) -- (0,0) -- (8.93,4.29) -- cycle    ;
			%Straight Lines [id:da7083119473788027] 
			\draw    (531.37,126.01) -- (489.02,139.19) ;
			\draw [shift={(486.15,140.08)}, rotate = 342.72] [fill={rgb, 255:red, 0; green, 0; blue, 0 }  ]   [draw opacity=0] (8.93,-4.29) -- (0,0) -- (8.93,4.29) -- cycle    ;
			%Straight Lines [id:da3429382712405462] 
			\draw    (554.37,85.01) -- (535.28,118.13) ;
			\draw [shift={(533.78,120.73)}, rotate = 299.97] [fill={rgb, 255:red, 0; green, 0; blue, 0 }  ]   [draw opacity=0] (8.93,-4.29) -- (0,0) -- (8.93,4.29) -- cycle    ;
			%Straight Lines [id:da02209065431156576] 
			\draw    (595.37,57.01) -- (561.44,79.77) ;
			\draw [shift={(558.95,81.44)}, rotate = 326.14] [fill={rgb, 255:red, 0; green, 0; blue, 0 }  ]   [draw opacity=0] (8.93,-4.29) -- (0,0) -- (8.93,4.29) -- cycle    ;
			%Straight Lines [id:da9345853851964616] 
			\draw    (578.37,126.01) -- (540.17,126.17) ;
			\draw [shift={(537.17,126.19)}, rotate = 359.76] [fill={rgb, 255:red, 0; green, 0; blue, 0 }  ]   [draw opacity=0] (8.93,-4.29) -- (0,0) -- (8.93,4.29) -- cycle    ;
			%Straight Lines [id:da9813490790491334] 
			\draw    (631.37,94.01) -- (585.59,121) ;
			\draw [shift={(583.01,122.52)}, rotate = 329.48] [fill={rgb, 255:red, 0; green, 0; blue, 0 }  ]   [draw opacity=0] (8.93,-4.29) -- (0,0) -- (8.93,4.29) -- cycle    ;
			%Straight Lines [id:da6815114148380811] 
			\draw    (638.37,138.01) -- (586.99,127.82) ;
			\draw [shift={(584.04,127.23)}, rotate = 11.22] [fill={rgb, 255:red, 0; green, 0; blue, 0 }  ]   [draw opacity=0] (8.93,-4.29) -- (0,0) -- (8.93,4.29) -- cycle    ;
			%Straight Lines [id:da8575134919656696] 
			\draw    (610.37,180.01) -- (582.75,133.64) ;
			\draw [shift={(581.21,131.07)}, rotate = 59.21] [fill={rgb, 255:red, 0; green, 0; blue, 0 }  ]   [draw opacity=0] (8.93,-4.29) -- (0,0) -- (8.93,4.29) -- cycle    ;
			%Straight Lines [id:da4153447874434719] 
			\draw    (362.37,84.01) -- (378.89,118.78) ;
			\draw [shift={(380.18,121.49)}, rotate = 244.59] [fill={rgb, 255:red, 0; green, 0; blue, 0 }  ]   [draw opacity=0] (8.93,-4.29) -- (0,0) -- (8.93,4.29) -- cycle    ;
			%Straight Lines [id:da04803474810147468] 
			\draw    (315.37,66.01) -- (354.52,80.11) ;
			\draw [shift={(357.34,81.13)}, rotate = 199.81] [fill={rgb, 255:red, 0; green, 0; blue, 0 }  ]   [draw opacity=0] (8.93,-4.29) -- (0,0) -- (8.93,4.29) -- cycle    ;
			%Straight Lines [id:da4005398418921863] 
			\draw    (376.41,38.16) -- (364.98,75.6) ;
			\draw [shift={(364.1,78.47)}, rotate = 286.98] [fill={rgb, 255:red, 0; green, 0; blue, 0 }  ]   [draw opacity=0] (8.93,-4.29) -- (0,0) -- (8.93,4.29) -- cycle    ;
			%Straight Lines [id:da808083477201513] 
			\draw    (82.41,54.16) -- (164.58,47.26) ;
			\draw [shift={(167.57,47.01)}, rotate = 175.2] [fill={rgb, 255:red, 0; green, 0; blue, 0 }  ]   [draw opacity=0] (8.93,-4.29) -- (0,0) -- (8.93,4.29) -- cycle    ;
			%Straight Lines [id:da8867446646427095] 
			\draw    (173.37,47.01) -- (158.42,65.87) -- (142.25,86.26) ;
			\draw [shift={(140.39,88.61)}, rotate = 308.41] [fill={rgb, 255:red, 0; green, 0; blue, 0 }  ]   [draw opacity=0] (8.93,-4.29) -- (0,0) -- (8.93,4.29) -- cycle    ;
			%Straight Lines [id:da4397543742094139] 
			\draw    (137.4,93.58) -- (182.95,121.48) ;
			\draw [shift={(185.51,123.05)}, rotate = 211.49] [fill={rgb, 255:red, 0; green, 0; blue, 0 }  ]   [draw opacity=0] (8.93,-4.29) -- (0,0) -- (8.93,4.29) -- cycle    ;
			%Straight Lines [id:da9873007157193421] 
			\draw    (74.57,132.77) -- (130.73,99.23) ;
			\draw [shift={(133.3,97.69)}, rotate = 149.15] [fill={rgb, 255:red, 0; green, 0; blue, 0 }  ]   [draw opacity=0] (8.93,-4.29) -- (0,0) -- (8.93,4.29) -- cycle    ;
			%Straight Lines [id:da16953651256677138] 
			\draw    (190.61,125.8) -- (142.37,152.77) ;
			\draw [shift={(139.75,154.23)}, rotate = 330.79] [fill={rgb, 255:red, 0; green, 0; blue, 0 }  ]   [draw opacity=0] (8.93,-4.29) -- (0,0) -- (8.93,4.29) -- cycle    ;
			%Straight Lines [id:da5860714274915128] 
			\draw    (127.29,230.71) -- (86.18,204.21) ;
			\draw [shift={(83.66,202.58)}, rotate = 32.81] [fill={rgb, 255:red, 0; green, 0; blue, 0 }  ]   [draw opacity=0] (8.93,-4.29) -- (0,0) -- (8.93,4.29) -- cycle    ;
			%Straight Lines [id:da7123642096758855] 
			\draw    (78.82,199.39) -- (128.01,162.78) ;
			\draw [shift={(130.42,160.99)}, rotate = 143.34] [fill={rgb, 255:red, 0; green, 0; blue, 0 }  ]   [draw opacity=0] (8.93,-4.29) -- (0,0) -- (8.93,4.29) -- cycle    ;
			%Straight Lines [id:da5817474868565166] 
			\draw    (181.95,202.34) -- (141.31,162.87) ;
			\draw [shift={(139.15,160.78)}, rotate = 44.16] [fill={rgb, 255:red, 0; green, 0; blue, 0 }  ]   [draw opacity=0] (8.93,-4.29) -- (0,0) -- (8.93,4.29) -- cycle    ;
			%Shape: Circle [id:dp7487502823366963] 
			\draw  [fill={rgb, 255:red, 255; green, 255; blue, 255 }  ,fill opacity=1 ]   (141.51,97.68) .. controls (139.24,99.95) and (135.57,99.95) .. (133.3,97.69) .. controls (131.04,95.42) and (131.03,91.75) .. (133.3,89.48) .. controls (135.56,87.22) and (139.23,87.21) .. (141.5,89.48) .. controls (143.77,91.74) and (143.77,95.41) .. (141.51,97.68) -- cycle ;
			%Shape: Circle [id:dp915387144576147] 
			\draw  [fill={rgb, 255:red, 255; green, 255; blue, 255 }  ,fill opacity=1 ]   (82.01,194.55) .. controls (84.69,196.31) and (85.42,199.91) .. (83.66,202.58) .. controls (81.89,205.26) and (78.3,206) .. (75.62,204.23) .. controls (72.95,202.47) and (72.21,198.87) .. (73.98,196.2) .. controls (75.74,193.52) and (79.34,192.79) .. (82.01,194.55) -- cycle ;
			%Shape: Circle [id:dp7375697218192767] 
			\draw  [fill={rgb, 255:red, 255; green, 255; blue, 255 }  ,fill opacity=1 ]   (176.15,202.34) .. controls (176.15,199.13) and (178.75,196.54) .. (181.95,196.54) .. controls (185.15,196.54) and (187.75,199.13) .. (187.75,202.34) .. controls (187.75,205.54) and (185.15,208.14) .. (181.95,208.14) .. controls (178.75,208.14) and (176.15,205.54) .. (176.15,202.34) -- cycle ;
			%Shape: Circle [id:dp4357018276781581] 
			\draw  [fill={rgb, 255:red, 255; green, 255; blue, 255 }  ,fill opacity=1 ]   (167.57,47.01) .. controls (167.57,43.81) and (170.17,41.21) .. (173.37,41.21) .. controls (176.58,41.21) and (179.17,43.81) .. (179.17,47.01) .. controls (179.17,50.21) and (176.58,52.81) .. (173.37,52.81) .. controls (170.17,52.81) and (167.57,50.21) .. (167.57,47.01) -- cycle ;
			%Shape: Circle [id:dp4338198011903639] 
			\draw  [fill={rgb, 255:red, 255; green, 255; blue, 255 }  ,fill opacity=1 ]   (138.4,152.61) .. controls (140.86,154.66) and (141.2,158.31) .. (139.15,160.78) .. controls (137.11,163.24) and (133.45,163.58) .. (130.99,161.53) .. controls (128.52,159.49) and (128.19,155.83) .. (130.23,153.37) .. controls (132.28,150.9) and (135.94,150.56) .. (138.4,152.61) -- cycle ;
			%Shape: Circle [id:dp8951366396426678] 
			\draw  [fill={rgb, 255:red, 255; green, 255; blue, 255 }  ,fill opacity=1 ]   (187.86,130.9) .. controls (185.04,129.38) and (183.99,125.87) .. (185.51,123.05) .. controls (187.03,120.23) and (190.55,119.17) .. (193.37,120.69) .. controls (196.19,122.21) and (197.24,125.73) .. (195.72,128.55) .. controls (194.2,131.37) and (190.68,132.42) .. (187.86,130.9) -- cycle ;
			%Shape: Circle [id:dp35752610344301883] 
			\draw  [fill={rgb, 255:red, 255; green, 255; blue, 255 }  ,fill opacity=1 ]   (121.49,230.71) .. controls (121.49,227.5) and (124.08,224.91) .. (127.29,224.91) .. controls (130.49,224.91) and (133.09,227.5) .. (133.09,230.71) .. controls (133.09,233.91) and (130.49,236.51) .. (127.29,236.51) .. controls (124.08,236.51) and (121.49,233.91) .. (121.49,230.71) -- cycle ;
			%Shape: Circle [id:dp8611234272879673] 
			\draw  [fill={rgb, 255:red, 255; green, 255; blue, 255 }  ,fill opacity=1 ]   (68.77,132.77) .. controls (68.77,129.57) and (71.37,126.97) .. (74.57,126.97) .. controls (77.78,126.97) and (80.37,129.57) .. (80.37,132.77) .. controls (80.37,135.98) and (77.78,138.57) .. (74.57,138.57) .. controls (71.37,138.57) and (68.77,135.98) .. (68.77,132.77) -- cycle ;
			%Shape: Circle [id:dp5279440364215211] 
			\draw  [fill={rgb, 255:red, 255; green, 255; blue, 255 }  ,fill opacity=1 ]   (76.61,54.16) .. controls (76.61,50.95) and (79.21,48.36) .. (82.41,48.36) .. controls (85.62,48.36) and (88.21,50.95) .. (88.21,54.16) .. controls (88.21,57.36) and (85.62,59.96) .. (82.41,59.96) .. controls (79.21,59.96) and (76.61,57.36) .. (76.61,54.16) -- cycle ;
			%Shape: Circle [id:dp9241070683663125] 
			\draw  [fill={rgb, 255:red, 255; green, 255; blue, 255 }  ,fill opacity=1 ]   (432.72,103.43) .. controls (433.35,100.29) and (436.41,98.26) .. (439.55,98.9) .. controls (442.69,99.53) and (444.72,102.59) .. (444.09,105.73) .. controls (443.45,108.87) and (440.39,110.9) .. (437.25,110.27) .. controls (434.11,109.63) and (432.08,106.57) .. (432.72,103.43) -- cycle ;
			%Shape: Circle [id:dp8391891247085108] 
			\draw  [fill={rgb, 255:red, 255; green, 255; blue, 255 }  ,fill opacity=1 ]   (378.39,183.59) .. controls (375.52,185.01) and (372.04,183.83) .. (370.62,180.96) .. controls (369.2,178.09) and (370.38,174.61) .. (373.25,173.19) .. controls (376.12,171.77) and (379.6,172.95) .. (381.02,175.82) .. controls (382.44,178.69) and (381.26,182.17) .. (378.39,183.59) -- cycle ;
			%Shape: Circle [id:dp23400541637751393] 
			\draw  [fill={rgb, 255:red, 255; green, 255; blue, 255 }  ,fill opacity=1 ]   (473.63,195.58) .. controls (476.82,195.96) and (479.09,198.84) .. (478.71,202.02) .. controls (478.33,205.2) and (475.45,207.47) .. (472.27,207.1) .. controls (469.08,206.72) and (466.81,203.83) .. (467.19,200.65) .. controls (467.57,197.47) and (470.45,195.2) .. (473.63,195.58) -- cycle ;
			%Shape: Circle [id:dp7345097062789326] 
			\draw  [fill={rgb, 255:red, 255; green, 255; blue, 255 }  ,fill opacity=1 ]   (440.57,61.01) .. controls (440.57,57.81) and (443.17,55.21) .. (446.37,55.21) .. controls (449.58,55.21) and (452.17,57.81) .. (452.17,61.01) .. controls (452.17,64.21) and (449.58,66.81) .. (446.37,66.81) .. controls (443.17,66.81) and (440.57,64.21) .. (440.57,61.01) -- cycle ;
			%Shape: Circle [id:dp3185163366338807] 
			\draw  [fill={rgb, 255:red, 255; green, 255; blue, 255 }  ,fill opacity=1 ]   (512.4,241.61) .. controls (514.86,243.66) and (515.2,247.31) .. (513.15,249.78) .. controls (511.11,252.24) and (507.45,252.58) .. (504.99,250.53) .. controls (502.52,248.49) and (502.19,244.83) .. (504.23,242.37) .. controls (506.28,239.9) and (509.94,239.56) .. (512.4,241.61) -- cycle ;
			%Shape: Circle [id:dp58579450129731] 
			\draw  [fill={rgb, 255:red, 255; green, 255; blue, 255 }  ,fill opacity=1 ]   (478.89,136.26) .. controls (481.95,135.31) and (485.2,137.02) .. (486.15,140.08) .. controls (487.1,143.14) and (485.39,146.39) .. (482.34,147.34) .. controls (479.28,148.29) and (476.03,146.58) .. (475.07,143.52) .. controls (474.12,140.46) and (475.83,137.21) .. (478.89,136.26) -- cycle ;
			%Shape: Circle [id:dp7079086909193586] 
			\draw  [fill={rgb, 255:red, 255; green, 255; blue, 255 }  ,fill opacity=1 ]   (417.03,218.64) .. controls (414.85,216.29) and (415,212.62) .. (417.35,210.45) .. controls (419.71,208.27) and (423.37,208.42) .. (425.55,210.77) .. controls (427.72,213.13) and (427.57,216.8) .. (425.22,218.97) .. controls (422.87,221.14) and (419.2,221) .. (417.03,218.64) -- cycle ;
			%Shape: Circle [id:dp5508817691944637] 
			\draw  [fill={rgb, 255:red, 255; green, 255; blue, 255 }  ,fill opacity=1 ]   (377.29,129.17) .. controls (375.97,126.25) and (377.26,122.81) .. (380.18,121.49) .. controls (383.1,120.17) and (386.53,121.46) .. (387.86,124.38) .. controls (389.18,127.3) and (387.88,130.73) .. (384.97,132.06) .. controls (382.05,133.38) and (378.61,132.08) .. (377.29,129.17) -- cycle ;
			%Shape: Circle [id:dp4158405041274127] 
			\draw  [fill={rgb, 255:red, 255; green, 255; blue, 255 }  ,fill opacity=1 ]   (370.61,38.16) .. controls (370.61,34.95) and (373.21,32.36) .. (376.41,32.36) .. controls (379.62,32.36) and (382.21,34.95) .. (382.21,38.16) .. controls (382.21,41.36) and (379.62,43.96) .. (376.41,43.96) .. controls (373.21,43.96) and (370.61,41.36) .. (370.61,38.16) -- cycle ;
			%Shape: Circle [id:dp2573987619267686] 
			\draw  [fill={rgb, 255:red, 255; green, 255; blue, 255 }  ,fill opacity=1 ]   (356.84,82.28) .. controls (357.79,79.23) and (361.04,77.52) .. (364.1,78.47) .. controls (367.16,79.43) and (368.86,82.68) .. (367.91,85.74) .. controls (366.96,88.79) and (363.71,90.5) .. (360.65,89.55) .. controls (357.59,88.59) and (355.88,85.34) .. (356.84,82.28) -- cycle ;
			%Shape: Circle [id:dp9814724015409597] 
			\draw  [fill={rgb, 255:red, 255; green, 255; blue, 255 }  ,fill opacity=1 ]   (309.57,66.01) .. controls (309.57,62.81) and (312.17,60.21) .. (315.37,60.21) .. controls (318.58,60.21) and (321.17,62.81) .. (321.17,66.01) .. controls (321.17,69.21) and (318.58,71.81) .. (315.37,71.81) .. controls (312.17,71.81) and (309.57,69.21) .. (309.57,66.01) -- cycle ;
			%Shape: Circle [id:dp6826200424964559] 
			\draw  [fill={rgb, 255:red, 255; green, 255; blue, 255 }  ,fill opacity=1 ]   (525.57,210.01) .. controls (525.57,206.81) and (528.17,204.21) .. (531.37,204.21) .. controls (534.58,204.21) and (537.17,206.81) .. (537.17,210.01) .. controls (537.17,213.21) and (534.58,215.81) .. (531.37,215.81) .. controls (528.17,215.81) and (525.57,213.21) .. (525.57,210.01) -- cycle ;
			%Shape: Circle [id:dp8487989536317069] 
			\draw  [fill={rgb, 255:red, 255; green, 255; blue, 255 }  ,fill opacity=1 ]   (333.6,211.83) .. controls (330.94,213.61) and (327.34,212.9) .. (325.56,210.24) .. controls (323.77,207.58) and (324.48,203.98) .. (327.14,202.19) .. controls (329.81,200.41) and (333.41,201.12) .. (335.19,203.78) .. controls (336.98,206.44) and (336.26,210.04) .. (333.6,211.83) -- cycle ;
			%Shape: Circle [id:dp021893369678809504] 
			\draw  [fill={rgb, 255:red, 255; green, 255; blue, 255 }  ,fill opacity=1 ]   (285.57,238.01) .. controls (285.57,234.81) and (288.17,232.21) .. (291.37,232.21) .. controls (294.58,232.21) and (297.17,234.81) .. (297.17,238.01) .. controls (297.17,241.21) and (294.58,243.81) .. (291.37,243.81) .. controls (288.17,243.81) and (285.57,241.21) .. (285.57,238.01) -- cycle ;
			%Shape: Circle [id:dp9951131941094259] 
			\draw  [fill={rgb, 255:red, 255; green, 255; blue, 255 }  ,fill opacity=1 ]   (531.55,120.21) .. controls (534.75,120.31) and (537.27,122.98) .. (537.17,126.19) .. controls (537.07,129.39) and (534.4,131.9) .. (531.2,131.81) .. controls (528,131.71) and (525.48,129.04) .. (525.58,125.83) .. controls (525.67,122.63) and (528.35,120.11) .. (531.55,120.21) -- cycle ;
			%Shape: Circle [id:dp34767680410011237] 
			\draw  [fill={rgb, 255:red, 255; green, 255; blue, 255 }  ,fill opacity=1 ]   (550.81,80.43) .. controls (553.34,78.47) and (556.98,78.92) .. (558.95,81.44) .. controls (560.92,83.97) and (560.47,87.62) .. (557.94,89.58) .. controls (555.41,91.55) and (551.77,91.1) .. (549.8,88.57) .. controls (547.83,86.05) and (548.28,82.4) .. (550.81,80.43) -- cycle ;
			%Shape: Circle [id:dp7395284573791436] 
			\draw  [fill={rgb, 255:red, 255; green, 255; blue, 255 }  ,fill opacity=1 ]   (583.43,123.17) .. controls (585,125.96) and (584.01,129.5) .. (581.21,131.07) .. controls (578.42,132.64) and (574.88,131.64) .. (573.32,128.85) .. controls (571.75,126.06) and (572.74,122.52) .. (575.54,120.95) .. controls (578.33,119.38) and (581.86,120.38) .. (583.43,123.17) -- cycle ;
			%Shape: Circle [id:dp7547251267110703] 
			\draw  [fill={rgb, 255:red, 255; green, 255; blue, 255 }  ,fill opacity=1 ]   (637.02,143.65) .. controls (633.91,142.9) and (631.99,139.77) .. (632.73,136.66) .. controls (633.48,133.54) and (636.61,131.62) .. (639.73,132.37) .. controls (642.84,133.12) and (644.76,136.25) .. (644.01,139.36) .. controls (643.27,142.48) and (640.14,144.4) .. (637.02,143.65) -- cycle ;
			%Shape: Circle [id:dp8289384789323447] 
			\draw  [fill={rgb, 255:red, 255; green, 255; blue, 255 }  ,fill opacity=1 ]   (633.69,99.33) .. controls (630.75,100.61) and (627.33,99.26) .. (626.06,96.32) .. controls (624.78,93.39) and (626.12,89.97) .. (629.06,88.69) .. controls (632,87.41) and (635.41,88.76) .. (636.69,91.69) .. controls (637.97,94.63) and (636.63,98.05) .. (633.69,99.33) -- cycle ;
			%Shape: Circle [id:dp0807156101282428] 
			\draw  [fill={rgb, 255:red, 255; green, 255; blue, 255 }  ,fill opacity=1 ]   (589.57,57.01) .. controls (589.57,53.81) and (592.17,51.21) .. (595.37,51.21) .. controls (598.58,51.21) and (601.17,53.81) .. (601.17,57.01) .. controls (601.17,60.21) and (598.58,62.81) .. (595.37,62.81) .. controls (592.17,62.81) and (589.57,60.21) .. (589.57,57.01) -- cycle ;
			%Shape: Circle [id:dp048159756164447676] 
			\draw  [fill={rgb, 255:red, 255; green, 255; blue, 255 }  ,fill opacity=1 ]   (497.57,87.01) .. controls (497.57,83.81) and (500.17,81.21) .. (503.37,81.21) .. controls (506.58,81.21) and (509.17,83.81) .. (509.17,87.01) .. controls (509.17,90.21) and (506.58,92.81) .. (503.37,92.81) .. controls (500.17,92.81) and (497.57,90.21) .. (497.57,87.01) -- cycle ;
			%Shape: Circle [id:dp21344418239641216] 
			\draw  [fill={rgb, 255:red, 255; green, 255; blue, 255 }  ,fill opacity=1 ]   (605.18,182.58) .. controls (603.76,179.71) and (604.93,176.23) .. (607.8,174.81) .. controls (610.68,173.39) and (614.15,174.57) .. (615.57,177.44) .. controls (616.99,180.31) and (615.82,183.79) .. (612.95,185.21) .. controls (610.07,186.63) and (606.59,185.45) .. (605.18,182.58) -- cycle ;

		\end{tikzpicture}

		\caption{Left: a type-$1$ component in~$\overrightarrow{G}$. Right: a type-$2$ component in~$\overrightarrow{G}$.}
		\label{fig:parity_types}
	\end{figure}
	
	First, assume that~$C$ is a type-$1$ component. Since it has~$|C|-1$ edges, it has a unique root, and its underlying graph is a tree. Thus, $C$ is a rooted tree. Let us denote its root by~$r$. Since no vertex rejects in~\ref{item:V1_parity}, for all vertex $u$ which is a leaf, we have~$n_u = 1$. Then, by an immediate induction, we obtain that for every vertex~$u$ in~$C$, $n_u$ is equal to the number of vertices in the subtree of~$C$ rooted at~$u$ modulo~$k$ (again because~$u$ does not reject in~\ref{item:V1_parity}). In particular, we have $n_r = |C| \mod k$. Finally, since~$r$ does not reject in~\ref{item:V2_parity}, we have $n_r = 0 \mod k$, so $|C|$ is divisible by~$k$.
	
	Now, assume that~$C$ is a type-$2$ component. Since it has~$|C|$ edges, its underlying graph is a tree plus an edge. In other words, it is a cycle plus some trees attached to the vertices of this cycle. Let us denote this cycle by~$C_0$. Since every vertex in~$\overrightarrow{G}$ has at most one outgoing neighbor, then $C_0$ is an oriented cycle in~$\overrightarrow{G}$. Let us denote by~$s$ the number of vertices in~$C_0$, and by~$u_1, \ldots, u_s$ its consecutive vertices. For every $i \in \{1, \ldots, s\}$, let us denote by~$T_i$ the tree attached to~$u_i$ (including~$u_i$ itself), and by~$t_i$ its number of vertices. The tree~$T_i$ is a rooted tree, and its root is~$u_i$ (because the outgoing neighbor of~$u_i$ is in~$C_0$, so~$u_i$ does not have any outgoing neighbor in~$T_i$). Also, we have $|C| = \sum_{i=1}^{s} t_i$. See Figure~\ref{fig:parity_type2} for an illustration.
	
	\begin{figure}[h!]
		\centering
		
		\begin{tikzpicture}[x=0.6pt,y=0.6pt,yscale=-1,xscale=1]
			%uncomment if require: \path (0,300); %set diagram left start at 0, and has height of 300
			
			%Straight Lines [id:da7815229050338993] 
			\draw    (253.81,122.77) -- (248.51,165.72) ;
			\draw [shift={(248.14,168.69)}, rotate = 277.04] [fill={rgb, 255:red, 0; green, 0; blue, 0 }  ]   [draw opacity=0] (8.93,-4.29) -- (0,0) -- (8.93,4.29) -- cycle    ;
			%Straight Lines [id:da8154880604332175] 
			\draw    (247.05,174.39) -- (286.21,204.61) ;
			\draw [shift={(288.59,206.45)}, rotate = 217.66] [fill={rgb, 255:red, 0; green, 0; blue, 0 }  ]   [draw opacity=0] (8.93,-4.29) -- (0,0) -- (8.93,4.29) -- cycle    ;
			%Straight Lines [id:da6124210010170225] 
			\draw    (292.52,210.71) -- (335.64,199.44) ;
			\draw [shift={(338.54,198.68)}, rotate = 165.36] [fill={rgb, 255:red, 0; green, 0; blue, 0 }  ]   [draw opacity=0] (8.93,-4.29) -- (0,0) -- (8.93,4.29) -- cycle    ;
			%Straight Lines [id:da13764285090779627] 
			\draw    (344.18,197.34) -- (350.73,146.53) ;
			\draw [shift={(351.11,143.55)}, rotate = 97.34] [fill={rgb, 255:red, 0; green, 0; blue, 0 }  ]   [draw opacity=0] (8.93,-4.29) -- (0,0) -- (8.93,4.29) -- cycle    ;
			%Straight Lines [id:da726985281289674] 
			\draw    (351.85,137.8) -- (316,106.65) ;
			\draw [shift={(313.74,104.68)}, rotate = 40.99] [fill={rgb, 255:red, 0; green, 0; blue, 0 }  ]   [draw opacity=0] (8.93,-4.29) -- (0,0) -- (8.93,4.29) -- cycle    ;
			%Straight Lines [id:da6332391292715687] 
			\draw    (309.64,100.58) -- (262.1,119.84) ;
			\draw [shift={(259.32,120.97)}, rotate = 337.95] [fill={rgb, 255:red, 0; green, 0; blue, 0 }  ]   [draw opacity=0] (8.93,-4.29) -- (0,0) -- (8.93,4.29) -- cycle    ;
			%Straight Lines [id:da45191643139053406] 
			\draw    (317.61,57.01) -- (311.32,91.94) ;
			\draw [shift={(310.78,94.9)}, rotate = 280.21] [fill={rgb, 255:red, 0; green, 0; blue, 0 }  ]   [draw opacity=0] (8.93,-4.29) -- (0,0) -- (8.93,4.29) -- cycle    ;
			%Straight Lines [id:da8907304129905909] 
			\draw    (379.93,242.07) -- (349.8,204.11) ;
			\draw [shift={(347.93,201.76)}, rotate = 51.56] [fill={rgb, 255:red, 0; green, 0; blue, 0 }  ]   [draw opacity=0] (8.93,-4.29) -- (0,0) -- (8.93,4.29) -- cycle    ;
			%Straight Lines [id:da7225269503731314] 
			\draw    (402.61,206.01) -- (352.91,198.47) ;
			\draw [shift={(349.94,198.02)}, rotate = 8.62] [fill={rgb, 255:red, 0; green, 0; blue, 0 }  ]   [draw opacity=0] (8.93,-4.29) -- (0,0) -- (8.93,4.29) -- cycle    ;
			%Straight Lines [id:da03733105153169314] 
			\draw    (162.61,234.01) -- (194.46,208.13) ;
			\draw [shift={(196.79,206.24)}, rotate = 140.91] [fill={rgb, 255:red, 0; green, 0; blue, 0 }  ]   [draw opacity=0] (8.93,-4.29) -- (0,0) -- (8.93,4.29) -- cycle    ;
			%Straight Lines [id:da08744229807665804] 
			\draw    (201.61,203.01) -- (239.33,178.59) ;
			\draw [shift={(241.85,176.96)}, rotate = 147.09] [fill={rgb, 255:red, 0; green, 0; blue, 0 }  ]   [draw opacity=0] (8.93,-4.29) -- (0,0) -- (8.93,4.29) -- cycle    ;
			%Straight Lines [id:da5584199460807312] 
			\draw    (374.61,83.01) -- (355.52,129.81) ;
			\draw [shift={(354.39,132.59)}, rotate = 292.18] [fill={rgb, 255:red, 0; green, 0; blue, 0 }  ]   [draw opacity=0] (8.93,-4.29) -- (0,0) -- (8.93,4.29) -- cycle    ;
			%Straight Lines [id:da3124094565885608] 
			\draw    (402.61,122.01) -- (360.25,135.19) ;
			\draw [shift={(357.39,136.08)}, rotate = 342.72] [fill={rgb, 255:red, 0; green, 0; blue, 0 }  ]   [draw opacity=0] (8.93,-4.29) -- (0,0) -- (8.93,4.29) -- cycle    ;
			%Straight Lines [id:da7201361694437378] 
			\draw    (425.61,81.01) -- (406.51,114.13) ;
			\draw [shift={(405.01,116.73)}, rotate = 299.97] [fill={rgb, 255:red, 0; green, 0; blue, 0 }  ]   [draw opacity=0] (8.93,-4.29) -- (0,0) -- (8.93,4.29) -- cycle    ;
			%Straight Lines [id:da005517249698063886] 
			\draw    (466.61,53.01) -- (432.67,75.77) ;
			\draw [shift={(430.18,77.44)}, rotate = 326.14] [fill={rgb, 255:red, 0; green, 0; blue, 0 }  ]   [draw opacity=0] (8.93,-4.29) -- (0,0) -- (8.93,4.29) -- cycle    ;
			%Straight Lines [id:da25257092951711346] 
			\draw    (449.61,122.01) -- (411.41,122.17) ;
			\draw [shift={(408.41,122.19)}, rotate = 359.76] [fill={rgb, 255:red, 0; green, 0; blue, 0 }  ]   [draw opacity=0] (8.93,-4.29) -- (0,0) -- (8.93,4.29) -- cycle    ;
			%Straight Lines [id:da19989120497610557] 
			\draw    (502.61,90.01) -- (456.83,117) ;
			\draw [shift={(454.24,118.52)}, rotate = 329.48] [fill={rgb, 255:red, 0; green, 0; blue, 0 }  ]   [draw opacity=0] (8.93,-4.29) -- (0,0) -- (8.93,4.29) -- cycle    ;
			%Straight Lines [id:da46430162263478925] 
			\draw    (509.61,134.01) -- (458.22,123.82) ;
			\draw [shift={(455.28,123.23)}, rotate = 11.22] [fill={rgb, 255:red, 0; green, 0; blue, 0 }  ]   [draw opacity=0] (8.93,-4.29) -- (0,0) -- (8.93,4.29) -- cycle    ;
			%Straight Lines [id:da30304692022135615] 
			\draw    (481.61,176.01) -- (453.98,129.64) ;
			\draw [shift={(452.45,127.07)}, rotate = 59.21] [fill={rgb, 255:red, 0; green, 0; blue, 0 }  ]   [draw opacity=0] (8.93,-4.29) -- (0,0) -- (8.93,4.29) -- cycle    ;
			%Straight Lines [id:da5351238031245863] 
			\draw    (233.61,80.01) -- (250.12,114.78) ;
			\draw [shift={(251.41,117.49)}, rotate = 244.59] [fill={rgb, 255:red, 0; green, 0; blue, 0 }  ]   [draw opacity=0] (8.93,-4.29) -- (0,0) -- (8.93,4.29) -- cycle    ;
			%Straight Lines [id:da04566295641975837] 
			\draw    (186.61,62.01) -- (225.75,76.11) ;
			\draw [shift={(228.57,77.13)}, rotate = 199.81] [fill={rgb, 255:red, 0; green, 0; blue, 0 }  ]   [draw opacity=0] (8.93,-4.29) -- (0,0) -- (8.93,4.29) -- cycle    ;
			%Straight Lines [id:da30928104592286654] 
			\draw    (247.65,34.16) -- (236.21,71.6) ;
			\draw [shift={(235.33,74.47)}, rotate = 286.98] [fill={rgb, 255:red, 0; green, 0; blue, 0 }  ]   [draw opacity=0] (8.93,-4.29) -- (0,0) -- (8.93,4.29) -- cycle    ;
			%Shape: Circle [id:dp36359423047565054] 
			\draw  [fill={rgb, 255:red, 255; green, 255; blue, 255 }  ,fill opacity=1 ]   (303.95,99.43) .. controls (304.58,96.29) and (307.64,94.26) .. (310.78,94.9) .. controls (313.92,95.53) and (315.95,98.59) .. (315.32,101.73) .. controls (314.69,104.87) and (311.63,106.9) .. (308.49,106.27) .. controls (305.35,105.63) and (303.32,102.57) .. (303.95,99.43) -- cycle ;
			%Shape: Circle [id:dp34952908519560044] 
			\draw  [fill={rgb, 255:red, 255; green, 255; blue, 255 }  ,fill opacity=1 ]   (249.62,179.59) .. controls (246.75,181.01) and (243.27,179.83) .. (241.85,176.96) .. controls (240.43,174.09) and (241.61,170.61) .. (244.48,169.19) .. controls (247.35,167.77) and (250.83,168.95) .. (252.25,171.82) .. controls (253.67,174.69) and (252.49,178.17) .. (249.62,179.59) -- cycle ;
			%Shape: Circle [id:dp26456844520647993] 
			\draw  [fill={rgb, 255:red, 255; green, 255; blue, 255 }  ,fill opacity=1 ]   (344.87,191.58) .. controls (348.05,191.96) and (350.32,194.84) .. (349.94,198.02) .. controls (349.56,201.2) and (346.68,203.47) .. (343.5,203.1) .. controls (340.32,202.72) and (338.05,199.83) .. (338.42,196.65) .. controls (338.8,193.47) and (341.69,191.2) .. (344.87,191.58) -- cycle ;
			%Shape: Circle [id:dp4002112343416173] 
			\draw  [fill={rgb, 255:red, 255; green, 255; blue, 255 }  ,fill opacity=1 ]   (311.81,57.01) .. controls (311.81,53.81) and (314.4,51.21) .. (317.61,51.21) .. controls (320.81,51.21) and (323.41,53.81) .. (323.41,57.01) .. controls (323.41,60.21) and (320.81,62.81) .. (317.61,62.81) .. controls (314.4,62.81) and (311.81,60.21) .. (311.81,57.01) -- cycle ;
			%Shape: Circle [id:dp11892751632988607] 
			\draw  [fill={rgb, 255:red, 255; green, 255; blue, 255 }  ,fill opacity=1 ]   (383.63,237.61) .. controls (386.1,239.66) and (386.44,243.31) .. (384.39,245.78) .. controls (382.34,248.24) and (378.68,248.58) .. (376.22,246.53) .. controls (373.76,244.49) and (373.42,240.83) .. (375.47,238.37) .. controls (377.51,235.9) and (381.17,235.56) .. (383.63,237.61) -- cycle ;
			%Shape: Circle [id:dp2695492923788625] 
			\draw  [fill={rgb, 255:red, 255; green, 255; blue, 255 }  ,fill opacity=1 ]   (350.12,132.26) .. controls (353.18,131.31) and (356.43,133.02) .. (357.39,136.08) .. controls (358.34,139.14) and (356.63,142.39) .. (353.57,143.34) .. controls (350.51,144.29) and (347.26,142.58) .. (346.31,139.52) .. controls (345.36,136.46) and (347.07,133.21) .. (350.12,132.26) -- cycle ;
			%Shape: Circle [id:dp34088582470290407] 
			\draw  [fill={rgb, 255:red, 255; green, 255; blue, 255 }  ,fill opacity=1 ]   (288.26,214.64) .. controls (286.09,212.29) and (286.23,208.62) .. (288.59,206.45) .. controls (290.94,204.27) and (294.61,204.42) .. (296.78,206.77) .. controls (298.95,209.13) and (298.81,212.8) .. (296.45,214.97) .. controls (294.1,217.14) and (290.43,217) .. (288.26,214.64) -- cycle ;
			%Shape: Circle [id:dp9165402149739377] 
			\draw  [fill={rgb, 255:red, 255; green, 255; blue, 255 }  ,fill opacity=1 ]   (248.52,125.17) .. controls (247.2,122.25) and (248.49,118.81) .. (251.41,117.49) .. controls (254.33,116.17) and (257.77,117.46) .. (259.09,120.38) .. controls (260.41,123.3) and (259.12,126.73) .. (256.2,128.06) .. controls (253.28,129.38) and (249.85,128.08) .. (248.52,125.17) -- cycle ;
			%Shape: Circle [id:dp01093290274475578] 
			\draw  [fill={rgb, 255:red, 255; green, 255; blue, 255 }  ,fill opacity=1 ]   (241.85,34.16) .. controls (241.85,30.95) and (244.44,28.36) .. (247.65,28.36) .. controls (250.85,28.36) and (253.45,30.95) .. (253.45,34.16) .. controls (253.45,37.36) and (250.85,39.96) .. (247.65,39.96) .. controls (244.44,39.96) and (241.85,37.36) .. (241.85,34.16) -- cycle ;
			%Shape: Circle [id:dp49794346475180995] 
			\draw  [fill={rgb, 255:red, 255; green, 255; blue, 255 }  ,fill opacity=1 ]   (228.07,78.28) .. controls (229.02,75.23) and (232.28,73.52) .. (235.33,74.47) .. controls (238.39,75.43) and (240.1,78.68) .. (239.15,81.74) .. controls (238.19,84.79) and (234.94,86.5) .. (231.88,85.55) .. controls (228.82,84.59) and (227.12,81.34) .. (228.07,78.28) -- cycle ;
			%Shape: Circle [id:dp647527235077089] 
			\draw  [fill={rgb, 255:red, 255; green, 255; blue, 255 }  ,fill opacity=1 ]   (180.81,62.01) .. controls (180.81,58.81) and (183.4,56.21) .. (186.61,56.21) .. controls (189.81,56.21) and (192.41,58.81) .. (192.41,62.01) .. controls (192.41,65.21) and (189.81,67.81) .. (186.61,67.81) .. controls (183.4,67.81) and (180.81,65.21) .. (180.81,62.01) -- cycle ;
			%Shape: Circle [id:dp5532273395356848] 
			\draw  [fill={rgb, 255:red, 255; green, 255; blue, 255 }  ,fill opacity=1 ]   (396.81,206.01) .. controls (396.81,202.81) and (399.4,200.21) .. (402.61,200.21) .. controls (405.81,200.21) and (408.41,202.81) .. (408.41,206.01) .. controls (408.41,209.21) and (405.81,211.81) .. (402.61,211.81) .. controls (399.4,211.81) and (396.81,209.21) .. (396.81,206.01) -- cycle ;
			%Shape: Circle [id:dp11608227116407] 
			\draw  [fill={rgb, 255:red, 255; green, 255; blue, 255 }  ,fill opacity=1 ]   (204.84,207.83) .. controls (202.18,209.61) and (198.57,208.9) .. (196.79,206.24) .. controls (195.01,203.58) and (195.72,199.98) .. (198.38,198.19) .. controls (201.04,196.41) and (204.64,197.12) .. (206.43,199.78) .. controls (208.21,202.44) and (207.5,206.04) .. (204.84,207.83) -- cycle ;
			%Shape: Circle [id:dp7879917805034963] 
			\draw  [fill={rgb, 255:red, 255; green, 255; blue, 255 }  ,fill opacity=1 ]   (156.81,234.01) .. controls (156.81,230.81) and (159.4,228.21) .. (162.61,228.21) .. controls (165.81,228.21) and (168.41,230.81) .. (168.41,234.01) .. controls (168.41,237.21) and (165.81,239.81) .. (162.61,239.81) .. controls (159.4,239.81) and (156.81,237.21) .. (156.81,234.01) -- cycle ;
			%Shape: Circle [id:dp9345916066288514] 
			\draw  [fill={rgb, 255:red, 255; green, 255; blue, 255 }  ,fill opacity=1 ]   (402.78,116.21) .. controls (405.99,116.31) and (408.5,118.98) .. (408.41,122.19) .. controls (408.31,125.39) and (405.63,127.9) .. (402.43,127.81) .. controls (399.23,127.71) and (396.71,125.04) .. (396.81,121.83) .. controls (396.91,118.63) and (399.58,116.11) .. (402.78,116.21) -- cycle ;
			%Shape: Circle [id:dp117211329646147] 
			\draw  [fill={rgb, 255:red, 255; green, 255; blue, 255 }  ,fill opacity=1 ]   (422.04,76.43) .. controls (424.57,74.47) and (428.21,74.92) .. (430.18,77.44) .. controls (432.15,79.97) and (431.7,83.62) .. (429.17,85.58) .. controls (426.65,87.55) and (423,87.1) .. (421.03,84.57) .. controls (419.06,82.05) and (419.52,78.4) .. (422.04,76.43) -- cycle ;
			%Shape: Circle [id:dp8109023689731891] 
			\draw  [fill={rgb, 255:red, 255; green, 255; blue, 255 }  ,fill opacity=1 ]   (454.67,119.17) .. controls (456.23,121.96) and (455.24,125.5) .. (452.45,127.07) .. controls (449.65,128.64) and (446.12,127.64) .. (444.55,124.85) .. controls (442.98,122.06) and (443.98,118.52) .. (446.77,116.95) .. controls (449.56,115.38) and (453.1,116.38) .. (454.67,119.17) -- cycle ;
			%Shape: Circle [id:dp28076327346805563] 
			\draw  [fill={rgb, 255:red, 255; green, 255; blue, 255 }  ,fill opacity=1 ]   (508.26,139.65) .. controls (505.14,138.9) and (503.22,135.77) .. (503.97,132.66) .. controls (504.71,129.54) and (507.84,127.62) .. (510.96,128.37) .. controls (514.07,129.12) and (515.99,132.25) .. (515.25,135.36) .. controls (514.5,138.48) and (511.37,140.4) .. (508.26,139.65) -- cycle ;
			%Shape: Circle [id:dp3593171643654842] 
			\draw  [fill={rgb, 255:red, 255; green, 255; blue, 255 }  ,fill opacity=1 ]   (504.92,95.33) .. controls (501.99,96.61) and (498.57,95.26) .. (497.29,92.32) .. controls (496.01,89.39) and (497.36,85.97) .. (500.29,84.69) .. controls (503.23,83.41) and (506.65,84.76) .. (507.93,87.69) .. controls (509.2,90.63) and (507.86,94.05) .. (504.92,95.33) -- cycle ;
			%Shape: Circle [id:dp2609557460399563] 
			\draw  [fill={rgb, 255:red, 255; green, 255; blue, 255 }  ,fill opacity=1 ]   (460.81,53.01) .. controls (460.81,49.81) and (463.4,47.21) .. (466.61,47.21) .. controls (469.81,47.21) and (472.41,49.81) .. (472.41,53.01) .. controls (472.41,56.21) and (469.81,58.81) .. (466.61,58.81) .. controls (463.4,58.81) and (460.81,56.21) .. (460.81,53.01) -- cycle ;
			%Shape: Circle [id:dp5519771849591231] 
			\draw  [fill={rgb, 255:red, 255; green, 255; blue, 255 }  ,fill opacity=1 ]   (368.81,83.01) .. controls (368.81,79.81) and (371.4,77.21) .. (374.61,77.21) .. controls (377.81,77.21) and (380.41,79.81) .. (380.41,83.01) .. controls (380.41,86.21) and (377.81,88.81) .. (374.61,88.81) .. controls (371.4,88.81) and (368.81,86.21) .. (368.81,83.01) -- cycle ;
			%Shape: Circle [id:dp6939634223086303] 
			\draw  [fill={rgb, 255:red, 255; green, 255; blue, 255 }  ,fill opacity=1 ]   (476.41,178.58) .. controls (474.99,175.71) and (476.17,172.23) .. (479.04,170.81) .. controls (481.91,169.39) and (485.39,170.57) .. (486.81,173.44) .. controls (488.23,176.31) and (487.05,179.79) .. (484.18,181.21) .. controls (481.31,182.63) and (477.83,181.45) .. (476.41,178.58) -- cycle ;
			%Shape: Polygon Curved [id:ds49537003109265265] 
			\draw  [dash pattern={on 4.5pt off 4.5pt}] (268.3,82.6) .. controls (288.3,72.6) and (313.3,71.6) .. (339.3,84.6) .. controls (365.3,97.6) and (381.3,155.6) .. (379.3,180.6) .. controls (377.3,205.6) and (340.3,229.6) .. (314.3,233.6) .. controls (288.3,237.6) and (255.3,218.6) .. (237.3,201.6) .. controls (219.3,184.6) and (217.3,167.6) .. (222.3,137.6) .. controls (227.3,107.6) and (248.3,92.6) .. (268.3,82.6) -- cycle ;
			%Shape: Polygon Curved [id:ds5429176425616775] 
			\draw  [dash pattern={on 4.5pt off 4.5pt}] (422.3,38.6) .. controls (444.3,31.6) and (467.3,26.6) .. (493.3,39.6) .. controls (519.3,52.6) and (534.3,121.6) .. (532.3,146.6) .. controls (530.3,171.6) and (499.3,219.6) .. (453.3,189.6) .. controls (407.3,159.6) and (367.3,188.6) .. (332.3,164.6) .. controls (297.3,140.6) and (338.3,100.6) .. (359.3,77.6) .. controls (380.3,54.6) and (400.3,45.6) .. (422.3,38.6) -- cycle ;
			
			% Text Node
			\draw (324,137.4) node [anchor=north west][inner sep=0.75pt]    {$u_{1}$};
			% Text Node
			\draw (322,174.4) node [anchor=north west][inner sep=0.75pt]    {$u_{6}$};
			% Text Node
			\draw (289,186.4) node [anchor=north west][inner sep=0.75pt]    {$u_{5}$};
			% Text Node
			\draw (256,160.4) node [anchor=north west][inner sep=0.75pt]    {$u_{4}$};
			% Text Node
			\draw (256.81,129.17) node [anchor=north west][inner sep=0.75pt]    {$u_{3}$};
			% Text Node
			\draw (298,112.4) node [anchor=north west][inner sep=0.75pt]    {$u_{2}$};
			% Text Node
			\draw (190,129.4) node [anchor=north west][inner sep=0.75pt]    {$C_{0}$};
			% Text Node
			\draw (526,54.4) node [anchor=north west][inner sep=0.75pt]    {$T_{1}$};
			% Text Node
			\draw (90,110.4) node [anchor=north west][inner sep=0.75pt]    {\large $C$};

		\end{tikzpicture}
		
		\caption{A type-$2$ component~$C$. The dashed areas are the oriented cycle $C_0$ and the rooted tree~$T_1$ (rooted at~$u_1$). Here, we have~$t_1 = 9$, $t_2 = 2$, $t_3 = 4$, $t_4 = 3$, $t_5 = 1$ and $t_6 = 3$.}
		\label{fig:parity_type2}
	\end{figure}
	
	Let~$i \in \{1, \ldots, s\}$. Since no vertex rejects in~\ref{item:V1_parity}, for every vertex~$u$ of~$T_i$ which is a leaf, we have $n_u = 1$. By an immediate induction, for every vertex~$u$ in $T_i$ different from~$u_i$, $n_u$ is equal to the number of vertices in the subtree of~$T_i$ rooted at~$u$ modulo~$k$. Since $u_i$ has also~$u_{i-1 \mod s}$ as a child, we get~$n_{u_i} = t_i + n_{u_{i-1}} \mod k$ (if $i > 1$) and~$n_{u_1} = t_1 + n_{u_s} \mod k$. We thus obtain the following equalities.
	
	\begin{align*}
		n_{u_s} &= t_s + n_{u_{s-1}} \mod k\\
		&= t_s + t_{s-1} + n_{u_{s-2}} \mod k\\
		&= \ldots \\
		&= \sum_{i=2}^{s}t_{i} + n_{u_1} \mod k \\
		&= \sum_{i=1}^{s}t_{i} + n_{u_s} \mod k
	\end{align*}
	
	Thus, $\sum_{i=1}^{s}t_{i} = 0 \mod k$, so $|C|$ is divisible by~$k$.
	To conclude the proof, we proved that, for every connected component~$C$ of the underlying graph of~$\overrightarrow{G}$, $|C|$ is divisible by~$k$. Then~$|V(\overrightarrow{G})|$ is also divisible by~$k$. Finally, since~$G$ and~$\overrightarrow{G}$ have the same number of vertices, we get that~$|V(G)|$ is divisible by~$k$. So~$G$ satisfies~$\P$, which proves the soundness of the scheme. \qedhere

    %%%%%%%%%%%%%%%%%%%%%%%%%%%%%%%%%%%%
	\section{Upper bound in bounded degree and bounded expansion graphs}
    \label{sec:upper-bounded-expansion}
    %%%%%%%%%%%%%%%%%%%%%%%%%%%%%%%%%%%%
	
	In the previous section, we proved that for every~$k \geqslant 1$, the property~$\P$ of having a number of vertices divisible by~$k$ is a local property in the 2-local LCP model. Since this model is not the weakest possible, a natural question is whether the unique identifiers, and the verification radius equal to~$2$, are necessary or not for~$\P$ to be local.
	In Section~\ref{sec:lower-bound}, we partially answer this question, by showing that at least one of these two assumptions is necessary. More precisely, we prove that, in the $1$-local anonymous model, $\P$ is not local anymore (see Theorem~\ref{thm:parity lower bound}).
	
	However, an important remark is that, if there exists a spanning tree~$\T$ such that the prover can encode the parent of each vertex in the certificates, and if each vertex can recover its parent and its set of children thanks to the certificates, then, with only $O(\log k)$ additional bits (corresponding to~$n_u \mod k$ in the proof of Theorem~\ref{thm:upper-bound-general-graphs}), the vertices can verify the property~$\P$, by proceeding exactly as in the proof of Theorem~\ref{thm:upper-bound-general-graphs}.
	In the $2$-local LCP model, to do so, it was sufficient for the prover to write the distance to the root $d_u \mod 3$ in the certificate of each vertex~$u$ (then, the vertices used the identifiers and the verification radius~$2$ to recover their parent and children).
	In the~$1$-local anonymous model, even if the same strategy does not work anymore, we can wonder if it is possible to encode this parent relationship using certificates of constant size, in restricted graph classes.
	
	The first example of a graph class in which such a constant-size encoding is possible in the $1$-local anonymous model is in bounded-degree graphs. In graphs of maximum degree at most~$\Delta$, the prover can compute a proper distance-2 coloring (that is, a coloring of the vertices such that every vertices at distance at most~2 have different colors), using~$\Delta^2+1$ colors. Then, with~$O(\log \Delta)$ bits, it encodes, in the certificate of every vertex, its color. The vertices can check that this distance-2 coloring is indeed proper. Finally, for every vertex~$u$, all its neighbors have distinct colors, so the prover can encode its parent~$v$ simply by writing the color of~$v$ in the certificate of~$u$, using again $O(\log \Delta)$ bits. We formalize this in the following theorem.

	\begin{proposition}
		\label{prop:UB_parity_bounded_degree}
		Let~$k \geqslant 2$, and let $\P$ be the property of having a number of vertices divisible by~$k$. In graphs of maximum degree~$\Delta$, in the $1$-local anonymous model, the local complexity of~$\P$ is~$O(\log k + \log \Delta)$.
	\end{proposition}
	
	\begin{proof}
		%The proof of Proposition~\ref{prop:UB_parity_bounded_degree} is very similar to the proof of Theorem~\ref{thm:upper-bound-general-graphs}. The only difference is that, in the proof Theorem~\ref{thm:upper-bound-general_graphs}, the prover uses the identifiers to encode the parents in spanning tree~$\T$. Then, with verification radius~$2$, each vertex can determine its set of children. To prove Proposition~\ref{prop:UB_parity_bounded_degree}, the same approach is not possible anymore since we are in the $1$-local anonymous model. In bounded degree graphs, the strategy will simply be to for the prover to encode a proper distance-2 coloring in the certificates, and, for each vertex, to write the color of its parent in its certificate. This allows every vertex to know its parent and its set of children.
		
		%More formally,
		Let~$G$ be a graph of maximum degree~$\Delta$. Then, $G$ has a proper distance-2 coloring (that is, a coloring of the vertices such that every vertices at distance at most~$2$ have distinct colors) using at most~$\Delta^2+1$ colors. The prover chooses a spanning tree~$\T$ of~$G$ and a proper distance-2 coloring~$\varphi$ of~$G$. It writes, in the certificate of each vertex~$u$, the following information (which can be encoded on~$O(\log k + \log \Delta)$ bits):
		\begin{enumerate}[(C1)]
			\item the color~$\varphi(u)$ of~$u$; \label{item:C1_parity_bounded_degree}
			\item the color~$\varphi(v)$ of the parent~$v$ of~$u$ in~$\T$ (except if~$u$ is the root of~$\T$); \label{item:C2_parity_bounded_degree}
			\item the number of vertices~$n_u$ in the subtree of~$\T$ rooted at~$u$ modulo~$k$. \label{item:C3_parity_bounded_degree}
		\end{enumerate}
		The verification procedure of the vertices is %analogous as in the proof of Theorem~\ref{thm:upper_bound_general_graphs}. 
		the following. Let~$c$ be a certificate assignment. First, every vertex~$u$ checks that the colors written in~\ref{item:C1_parity_bounded_degree} in its certificate and in the certificates of its neighbors are all distinct. Then, we say that~$u$ is a~\emph{root} if the color of its parent is not written in~\ref{item:C2_parity_bounded_degree}. If~$u$ is not a root with respect to~$c$, it checks that it has a unique neighbor, its~\emph{parent}, colored with the color written in~\ref{item:C2_parity_bounded_degree}. The~\emph{children} of~$u$ are its neighbors~$v$ such that~$u$ is the parent of~$v$. Finally, $u$ is a \emph{leaf} if it does not have any children. Every vertex~$u$ can determine its set of children, which are its neighbors~$v$ such that the color of~$u$ is written in~\ref{item:C2_parity_bounded_degree} in the certificate of~$v$. The vertices verify that items~\ref{item:V1_parity} and~\ref{item:V2_parity} of the proof of Theorem~\ref{thm:upper-bound-general-graphs} both hold. All the rest of the proof (the completeness and soundness of the scheme) is identical as in Theorem~\ref{thm:upper-bound-general-graphs}. \qedhere
	\end{proof}

	We now generalize Proposition~\ref{prop:UB_parity_bounded_degree} to the larger class of graphs of \emph{bounded expansion}. Again, the idea is to find a spanning tree~$\T$ such that encoding the parent relationship in~$\T$ can be done with a constant number of bits.
	But first, let us define the class of graphs of bounded expansion.
	
	\begin{definition}
		Let~$G, G'$ be two graphs. We say that~$G'$ is a \emph{minor} of~$G$ if there exists a (non-necessarily induced) subgraph~$H$ of~$G$, and a partition of the vertices of~$H$ into connected sets, such that~$G'$ is the graph obtained from~$H$ by contracting these sets into single vertices. If each of these sets has diameter at most~$t$, we say that $G'$ is a \emph{$t$-shallow minor} of~$G$.
	\end{definition}

	\begin{definition}
		A class of graphs~$\C$ has \emph{bounded expansion} if there is a function $f$ such that, for each $t \geqslant 1$, each~$G \in \C$ and each $t$-shallow minor~$G'$ of~$G$, we have $|E(G')|/|V(G')| \leqslant f(t)$.
	\end{definition}

	Now, we aim to prove the following result:
	
	\begin{theorem}
		\label{thm:UB_parity_bounded_expansion}
		Let~$\C$ be a class of graphs of bounded expansion. Let~$k \geqslant 2$, and let~$\P$ be the property of having a number of vertices divisible by~$k$. In~$\C$, in the $1$-local anonymous model, the local complexity of~$\P$ is $O(\log k + g(\C))$, where $g(\C)$ is a constant that depends only on the class~$\C$.
	\end{theorem}
	
	The proof of Theorem~\ref{thm:UB_parity_bounded_expansion} relies on the notion of \emph{conflict-free coloring}, that we define now.
	
	\begin{definition}
		A \emph{conflict-free coloring} of a graph~$G$ is a coloring~$\varphi$ of~$V(G)$ such that, for every~$u \in V(G)$, there exists~$v \in N(u)$ such that $v$ is the only vertex in~$N(u)$ colored with~$\varphi(v)$. The \emph{conflict-free coloring number} of~$G$, denoted by~$\chi_{cf}(G)$, is the smallest integer~$c$ such that $G$ has a conflict-free coloring with $c$ colors.
	\end{definition}
	
	We will rely on the following result by Hickingbotham~\cite{Hickingbotham23}, which states that a class of bounded expansion has bounded conflict-free coloring number. 
	
	\begin{theorem}[\cite{Hickingbotham23}]
		\label{thm:bounded_expansion_conflict_free}
		Let $\C$ be a class of graphs of bounded expansion. Then, there exists a constant $f(\C)$ depending only on the class~$\C$, such that, for every~$G \in \C$, we have $\chi_{cf}(G) \leqslant f(\C)$.
	\end{theorem}
	
	\begin{remark}
		\label{rem:hereditary_conflict_free}
		If $\C$ is a class of graphs of bounded expansion, it is immediate from the definition that the class obtained from~$\C$ by taking all the subgraphs of graphs in~$\C$ has also bounded expansion. Thus, the result of Theorem~\ref{thm:bounded_expansion_conflict_free} remains true for all subgraphs of graphs in~$\C$.
	\end{remark}

	We are now able to prove Theorem~\ref{thm:UB_parity_bounded_expansion}.
	
	\begin{proof}[Proof of Theorem~\ref{thm:UB_parity_bounded_expansion}.]
		Let $G \in \C$.
		Let us first describe how the prover assigns the certificates on a correct instance. First, it chooses an arbitrary vertex \mbox{$r \in V(G)$}.
		Let~$\delta_r$ be the maximum distance from~$r$ to any vertex in~$G$, and, for every integer $1 \leqslant d \leqslant \delta_r$, let~$G_d$ be the bipartite graph obtained from~$G$ by keeping only the vertices at distance~$d-1$ and~$d$ from~$r$, and the edges between these two sets of vertices (we do not keep the edges between two vertices at distance~$d-1$, and between two vertices at distance~$d$). By Theorem~\ref{thm:bounded_expansion_conflict_free} and Remark~\ref{rem:hereditary_conflict_free}, $G_d$ has a conflict-free coloring using~$f(\C)$ colors, where $f(\C)$ depends only on~$\C$.
		The prover computes, for every $d \in \{1, \ldots, \delta_r\}$, such a conflict-free coloring~$\varphi_d$ of~$G_d$.
		Finally, the prover computes the following spanning tree~$\T$ of~$G$, rooted at~$r$. For every vertex~$u \neq r$ at distance~$d$ from~$r$, the parent of~$u$ in~$\T$ is an arbitrary neighbor~$v$ at distance~$d-1$ from~$r$, such that~$v$ is the only neighbor of~$u$ at distance~$d-1$ from~$r$ colored, in the coloring~$\varphi_d$, with the color~$\varphi_d(v)$. Since~$\varphi_d$ is a conflict-free coloring of~$G_d$, such a vertex~$v$ exists.
		Finally, for every $u \in V(G)$, the prover writes in the certificate of~$u$ the following information:
		
		\begin{enumerate}[(C1)]
			\item the distance~$d_u$ from~$u$ to~$r$ modulo~$3$; \label{item:C1_parity_bdexpansion}
			\item the number of vertices~$n_u$ in the subtree of~$\T$ rooted at~$u$ modulo~$k$; \label{item:C2_parity_bdexpansion}
			\item if $u \neq r$, the color $\varphi_{d_u}(v)$ of the parent~$v$ of~$u$ in~$\T$ in the coloring~$\varphi_{d_u}$; \label{item:C3_parity_bdexpansion}
			\item if $d_u < \delta_r$, the color $\varphi_{d_u+1}(u)$ of~$u$ in the coloring~$\varphi_{d_u+1}$. \label{item:C4_parity_bdexpansion}
		\end{enumerate}
		
		Given a certificate assignment~$c$, we say that a vertex~$u$ is a \emph{root} (with respect to the certificate assignment~$c$) if the set \mbox{$\{v \in N(u) \; | \; d_v = d_u-1 \mod 3\}$} is empty (where $d_u, d_v$ are the integers written in~\ref{item:C1_parity_bdexpansion} in the certificates of~$u$ and~$v$).
		Each vertex~$u$ performs the following verification procedure:
		\begin{enumerate}[(V1)]
			\item if $u$ is not a root, it checks that it has a unique neighbor~$v$ such that~$d_v = d_u-1 \mod 3$ and the color written in~\ref{item:C3_parity_bdexpansion} in the certificate of~$u$ is the same as the color written in~\ref{item:C4_parity_bdexpansion} in the certificate of~$v$.
			\label{item:V1_parity_bdexpansion}
		\end{enumerate}
		
		If no vertex rejects in~\ref{item:V1_parity_bdexpansion}, every non-root vertex~$u$ with respect to~$c$ has a unique neighbor~$v$ satisfying the condition in~\ref{item:V1_parity_bdexpansion}. This neighbor is called its~\emph{parent} (with respect to~$c$). If~$v$ is the parent of~$u$, we say that~$u$ is a~\emph{child} of~$v$. If~$u$ has no child, we say that it is a~\emph{leaf}. Note that each vertex~$u$ can determine its set of children. Indeed, the children of $u$ are its neighbors~$w$ such that the color of~$w$ written in~\ref{item:C3_parity_bdexpansion} in the certificate of~$w$ is the same as the color of~$u$ written in~\ref{item:C4_parity_bdexpansion} in the certificate of~$u$. The verification of~$u$ then continues with the two following steps:
		\begin{enumerate}[(V1)]
			\setcounter{enumi}{1}
			\item if $u$ is a root, it checks that $n_u = 0 \mod k$; \label{item:V2_parity_bdexpansion}
			\item in all cases, $u$ checks that~$n_u = \sum_{v \in S} n_v \mod k$, where $S$ denotes its set of children. \label{item:V3_parity_bdexpansion}
		\end{enumerate}
		
		The proof of the correctness of the scheme is identical as in the proof of Theorem~\ref{thm:upper-bound-general-graphs}. \qedhere
	\end{proof}
	
\end{proof}

%%%%%%%%%%%%%%%%%%%%%%%%%%%%%%%%%
\section{Lower bound}
\label{sec:lower-bound}
%%%%%%%%%%%%%%%%%%%%%%%%%%%%%%%%%

We now move on to our lower bound. 

\begin{theorem}
	\label{thm:parity lower bound}
	In the $1$-local anonymous model, the local complexity of having an even number of vertices is $\omega(1)$.
\end{theorem}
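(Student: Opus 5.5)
We argue by contradiction, following the strategy sketched in the introduction. Suppose there is a $1$-local anonymous certification scheme for parity using certificates from a set of bounded size $L$ (independent of $n$). For odd $m$, we consider the powerset graph $G_m$: a bipartite graph with sides $V_1$ of size $m$ and $V_2$ of size $2^m-1$, where each non-empty subset $S\subseteq V_1$ corresponds to a unique vertex $x_S\in V_2$ with $N(x_S)=S$. It has $m+2^m-1$ vertices, which is even since $m$ is odd, so by completeness there is an accepting certificate assignment $c$ with at most $L$ values. The assignment $S\mapsto c(x_S)$ is then an $L$-coloring of the non-empty subsets of $[m]$.

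\textbf{Ramsey step.} By the finite unions theorem (Folkman's theorem), for $m$ large enough in terms of $L$ there are two disjoint non-empty sets $A,B\subseteq V_1$ such that $x_A$, $x_B$ and $x_{A\cup B}$ receive the same certificate $\gamma$. The weak version with two sets suffices. One could alternatively use Hindman-type results restricted to singletons plus Schur's theorem, but the finite unions theorem gives the cleanest path. Hence $m$ is chosen as a function of $L$, and this yields the $\omega(1)$ bound. Tracking the tower-type bounds would give the $\Omega(\log\log^* n)$ version.

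\textbf{Gluing step.} Take two disjoint copies $G^1,G^2$ of $G_m$, each carrying the assignment $c$. Identify $x_A^1$ with $x_B^2$ into a single vertex $z$ with certificate $\gamma$, which is consistent because both had certificate $\gamma$. The resulting graph $H$ is connected, simple (since $A\cap B=\emptyset$, no multi-edges arise; in fact the neighborhoods lie in different copies anyway), and has $2(m+2^m-1)-1$ vertices, an odd number. We then check that every vertex of $H$ accepts, which contradicts soundness.

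\begin{itemize}
\item A vertex $y\ne z$ not adjacent to $z$ has exactly the same radius-1 view as in $G_m$.
\item A neighbor $v$ of $z$ is a vertex of $A$ in copy 1 (or of $B$ in copy 2). Its neighborhood is the same as before, with $x_A$ replaced by $z$, which carries the same certificate $\gamma$. Since the view at radius 1 consists only of the vertex, its neighbors, their certificates, and the incident edges (no identifiers, no edges among neighbors beyond distance $0$ endpoints), the view is unchanged.
\item The vertex $z$ sees certificate $\gamma$ and a neighborhood of $|A|+|B|$ vertices, namely copies of the vertices of $A\cup B$ with their certificates from $c$. This is isomorphic, as a labeled star, to the view of $x_{A\cup B}$ in $G_m$, which has certificate $\gamma$ and neighbors $A\cup B$ with the same certificates. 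So $z$ accepts.
\end{itemize}

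The main obstacle is the Ramsey step: guaranteeing a monochromatic configuration $\{A,B,A\cup B\}$ with $A,B$ disjoint among colorings of subsets. This is exactly what the finite unions theorem provides, so the remaining care is only in verifying that the radius-1 anonymous views match, which uses crucially that the radius is $1$ and that there are no identifiers.
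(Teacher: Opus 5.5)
Your proposal is correct and follows essentially the same route as the paper's proof: the powerset graph $G_m$ with $m$ odd, and the finite unions theorem with $k=2$ applied to the coloring $S \mapsto c(x_S)$. It then glues $x_A$ of one copy to $x_B$ of another, yielding an accepted graph on $2(m+2^m-1)-1$ vertices. Your explicit remark that disjointness of $A$ and $B$ is what makes the merged vertex's radius-$1$ view match that of $x_{A\cup B}$ is a useful detail that the paper leaves implicit.
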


To prove Theorem~\ref{thm:parity lower bound}, we will need the following result from Ramsey theory, called the Folkman-Rado-Sanders Finite Unions Theorem. The classic proof can be found in~\cite[Chapter~3]{Graham91}.

\begin{theorem}[Finite Unions Theorem]
	\label{thm:finite unions}
	For every integers $k, c$, there exists an integer $m = f(k, c)$ such that, if all the non-empty subsets of~$[m]$ are colored with $c$ colors, then there exists~$k$ pairwise disjoint non-empty subsets~$S_1, \ldots, S_k \subseteq [n]$ such that \mbox{$\{\cup_{i \in I} S_i \; | \; \emptyset \subsetneq I \subseteq [k]\}$} (in other words, all non-empty unions of these sets) are colored with the same color.
\end{theorem}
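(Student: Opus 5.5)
The plan is to prove a stronger, ``order-focused'' statement by induction, and then derive the Finite Unions Theorem from it by pigeonhole. Throughout, $[n]$ in the statement should be read as $[m]$.

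\textbf{Stronger statement.} For every $\ell, c$ there is $m$ with the following property. For every $c$-coloring $\chi$ of the non-empty subsets of $[m]$, there are pairwise disjoint non-empty $S_1, \ldots, S_\ell$ such that the color of $\bigcup_{i \in I} S_i$ depends only on $\max I$. Call this color $c_j$ when $\max I = j$.

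\textbf{Deriving the theorem.} Given the stronger statement with $\ell = c(k-1)+1$, pigeonhole yields $k$ indices $j_1 < \cdots < j_k$ with $c_{j_1} = \cdots = c_{j_k}$. Every non-empty union of $S_{j_1}, \ldots, S_{j_k}$ has its maximum index among the $j_t$, so all these unions receive this common color.

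\textbf{Inductive step.} I would build $S_\ell, S_{\ell-1}, \ldots, S_1$ from the top down, splitting the ground set into consecutive blocks $B_\ell, \ldots, B_1$ and choosing each $S_j$ inside $B_j$. The requirement on $S_j$ involves all sets lying in lower blocks, which are not chosen yet. I would handle this by keeping \emph{all} possible lower unions as future parameters. More precisely, after $S_\ell, \ldots, S_{j+1}$ are fixed, I need $S_j \subseteq B_j$ such that for every choice of the eventual lower structure $U$, the color of $S_j \cup U \cup W$ depends only on $\{S_j\}$. Here $W$ ranges over unions of the already chosen higher sets with $\max$ above $j$, so those terms are actually irrelevant. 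The genuine content is this: the color of $S_j \cup U$ must be independent of the subset $U \subseteq B_{j-1} \cup \cdots \cup B_1$.

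To get this, I color each candidate $T \subseteq B_j$ by the vector $\bigl(\chi(T \cup U)\bigr)_{U}$, where $U$ ranges over all subsets of the lower blocks. This uses at most $c^{2^{|B_{<j}|}}$ colors. The chosen $T$ forces a constant value only after we pass to the lower structure. So I would instead apply the inductive hypothesis \emph{inside the lower blocks} with this enlarged number of colors. I would iterate this color-refinement, so that the required block sizes grow as a tower, which is consistent with the $\log\log^* n$ bound announced in the introduction.

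\textbf{Main obstacle: the base of the focusing.} The hard step is ensuring that the color of $T \cup U$ equals the color of $T$ alone, so that the empty union $U = \emptyset$ is also included. Here I would invoke the Hales--Jewett theorem. Coordinates are indexed by $[N]$, each coordinate owning a private copy of the lower structure, and the alphabet consists of the possible lower unions. A monochromatic combinatorial line then provides a set $T$, namely the variable coordinates, for which $\chi(T \cup U)$ is constant as the alphabet letter $U$ varies, including the empty letter.

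The difficulty is that the fixed coordinates of the line contribute an extra set $A_0$ to every member. I expect this to be the delicate point. My first attempt would be to absorb $A_0$ into the lower part, treating it as part of the not-yet-focused structure. Failing that, I would fall back on the classic derivation in \cite[Chapter~3]{Graham91} via the Graham--Rothschild parameter-set version of Hales--Jewett, where fixed coordinates can be taken to be $0$.
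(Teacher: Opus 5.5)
\textbf{There is a genuine gap: the focusing step is never established.} Your reduction itself is sound. The focused statement (the color of $\bigcup_{i\in I}S_i$ depends only on $\max I$), combined with pigeonhole over $\ell=c(k-1)+1$ indices, does give the theorem, and this is the standard architecture; the paper gives no proof of its own and only cites \cite{Graham91}. The problem is the inductive step, which carries all the content.

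As set up, the block framework cannot work. Suppose $B_\ell,\dots,B_1$ are fixed before the coloring is revealed, as your prescribed tower of block sizes suggests. Take the $2$-coloring $\chi(X)=1$ if $X\cap B_1\neq\emptyset$ and $\chi(X)=2$ otherwise. Then $\chi(S_2)=2\neq 1=\chi(S_2\cup S_1)$ for every $S_2\subseteq B_2$ and $S_1\subseteq B_1$. A fortiori, $\chi(S_j\cup U)$ cannot be independent of all $U\subseteq B_{<j}$. The lower structure has to be produced \emph{together with} $S_j$, not inside prescribed blocks.

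The color-refinement paragraph does not help either. You color candidates $T\subseteq B_j$, then apply the induction hypothesis inside the lower blocks, where that coloring does not live. Even the transposed version, coloring $U$ by $(\chi(T\cup U))_T$, only makes $\chi(T\cup\bigcup_{i\in I}S_i)$ depend on $T$ and $\max I$. It never forces this color to equal $\chi(T)$.

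\textbf{The missing idea is in the Hales--Jewett step, where the roles of the two parts of the line are reversed.} Take $H=\mathrm{HJ}(2^M,c+1)$, the Hales--Jewett number for $2^M$ letters and $c+1$ colors. Use ground set $[H]\times[M]$, alphabet $\mathcal{P}([M])$, and the map $w\mapsto\{(i,t): t\in w_i\}$. A combinatorial line with moving coordinates $T\neq\emptyset$ then consists of the sets $A_0\cup(T\times J)$ for $J\subseteq[M]$, where $A_0$ comes from the fixed coordinates. There is no set of the form ``$T\cup U$'' in it.

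The fixed part $A_0$ is not a nuisance: it \emph{is} the focused set.
\begin{itemize}
\item Put $S_\ell=A_0$ and $D_t=T\times\{t\}$. These are pairwise disjoint, nonempty and disjoint from $A_0$. Monochromaticity of the line says exactly $\chi(A_0\cup\bigcup_{t\in J}D_t)=\chi(A_0)$ for all $J$, including $J=\emptyset$.
\item To ensure $A_0\neq\emptyset$, give $\emptyset$ an extra $(c+1)$-st color before applying Hales--Jewett. The member with $J=[M]$ is nonempty, so the line avoids color $c+1$; hence the member with $J=\emptyset$, namely $A_0$, is nonempty.
\item Now color each nonempty $J\subseteq[M]$ by $\chi(\bigcup_{t\in J}D_t)$. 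Apply the induction hypothesis with $M=g(\ell-1,c)$ and the same $c$ colors, so no color blow-up is needed. Take $S_i=\bigcup_{t\in J_i}D_t$ for $i<\ell$.
\item This gives $g(\ell,c)\le M\cdot\mathrm{HJ}(2^M,c+1)$.
\end{itemize}

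Neither of your proposed repairs closes the gap. Absorbing $A_0$ into the lower part fails, because every member of the line contains $A_0$, including the member that must play the role of the empty lower union. The fallback is either false or circular. Hales--Jewett with fixed coordinates forced to $0$ is false: give the all-zero word its own color. The Graham--Rothschild theorem over a one-letter alphabet with $k=1$ is literally the Finite Unions Theorem, so invoking it cites the result rather than proving it.
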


We are in particular interested with the case $k=2$ in Theorem~\ref{thm:finite unions}, which guarantees the existence of two disjoint non-empty subsets $A, B \subseteq [m]$ such that $A$, $B$ and $A \cup B$ have the same color, where~$m := f(2,c)$. Using this result, we are able to prove Theorem~\ref{thm:parity lower bound}

\begin{proof}[Proof of Theorem~\ref{thm:parity lower bound}.]
	By contradiction, assume that it is possible to certify that a graph has an even number of vertices, in the 1-local anonymous model, with certificates of constant size~$s$. Let $c: = 2^s$ be the number of possible different certificates in such a certification scheme.
	Let $m \geqslant f(2,c)$ be an odd integer, where~$f$ is the function provided by Theorem~\ref{thm:finite unions}, and finally let $n := m + 2^m -1$. Note that $n$ is even. Consider the $n$-vertex bipartite graph~$G_m$, where the vertices of $G_m$ are partitioned into two sets $X,Y$ (to simplify the notations, we do not index~$X$ and~$Y$ by~$m$). Let $X := [m]$ and $Y$ be the set of non-empty subsets of $[m]$. For every $x \in X$ and $y \in Y$, there is an edge between $x$ and $y$ in~$G_m$ if and only if $x \in y$.
	
	Since $G_m$ has an even number of vertices, there exists a certificate assignment to its vertices such that each vertex accepts. Consider such a certificate assignment, with certificates in~$[c]$. Consider the following coloring of the non-empty subsets of~$[m]$. Let $\emptyset \subseteq S \subseteq [m]$. $S$ is also a non-empty subset of vertices of~$X$. By construction of $G_m$, there is a unique vertex~$y \in Y$ such the neighbors of~$y$ in~$X$ are exactly the vertices in~$S$. We define the color of~$S$ as being the certificate of~$y$.
	
	By Theorem~\ref{thm:finite unions}, there exist two disjoint non-empty subsets $A, B \subseteq [m]$ such that $A$, $B$, and $A \cup B$ all have the same color. In other words, there exist three vertices $y_A, y_B, y_{A \cup B} \in Y$ such that:
	\begin{itemize}
		\item $y_A$, $y_B$, $y_{A \cup B}$ all have the same certificate, and
		\item the neighbors of $y_A$ (resp of $y_B$, $y_{A \cup B}$) in~$X$ are the vertices in~$A$ (resp in $B$, $A \cup B$).
	\end{itemize}
	
	We construct the following graph $G_m'$, together with a certificate assignment such that every vertex accepts. We take two copies of~$G_m$, each one equipped with the accepting certificate assignment of~$G_m$. Then, we identify the two vertices corresponding to $y_A$ in the first copy, and to $y_B$ in the second copy. The graph resulting from this operation is~$G_m'$. Note that this operation is compatible with the certificate assignment because $y_A$ and $y_B$ have the same certificates. Moreover, $G_m'$ has $2n-1$ vertices, so it has an odd number of vertices.
	Finally, every vertex accepts in $G_m'$, because the vertex obtained by identifying the vertices $y_A$ and $y_B$ in the two copies has the same view as $y_{A \cup B}$ in $G_m$, and all the other vertices have the same view as they already have in~$G_m$. This is a contradiction and concludes the proof. See Figure~\ref{fig:lower_anonymous} for an illustration. \qedhere

	\begin{figure}
		\centering
		
		\begin{tikzpicture}[x=0.55pt,y=0.55pt,yscale=-1,xscale=1]
			%uncomment if require: \path (0,456); %set diagram left start at 0, and has height of 456
			
			%Straight Lines [id:da6059746948914558] 
			\draw    (350.8,302.8) -- (425.8,372.8) ;
			%Straight Lines [id:da9976063106434676] 
			\draw    (350.8,302.8) -- (375.8,372.8) ;
			%Straight Lines [id:da3849731558487043] 
			\draw    (300.8,302.8) -- (425.8,372.8) ;
			%Straight Lines [id:da37060455737411124] 
			\draw    (300.8,302.8) -- (375.8,372.8) ;
			%Straight Lines [id:da8191344736487329] 
			\draw    (300.8,302.8) -- (325.8,372.8) ;
			%Straight Lines [id:da47624850835220056] 
			\draw    (300.8,302.8) -- (275.8,372.8) ;
			%Straight Lines [id:da3075740797237452] 
			\draw    (300.8,302.8) -- (225.8,372.8) ;
			%Straight Lines [id:da786777517221951] 
			\draw    (300.8,302.8) -- (175.8,372.8) ;
			%Straight Lines [id:da06032391894609013] 
			\draw    (250.8,302.8) -- (325.8,372.8) ;
			%Straight Lines [id:da1091636707002942] 
			\draw    (250.8,302.8) -- (275.8,372.8) ;
			%Straight Lines [id:da9621357056942232] 
			\draw    (250.8,302.8) -- (225.8,372.8) ;
			%Straight Lines [id:da41267868607665015] 
			\draw    (250.8,302.8) -- (175.8,372.8) ;
			%Straight Lines [id:da9201210581688456] 
			\draw    (219.8,57.8) -- (294.8,127.8) ;
			%Straight Lines [id:da4547148876133331] 
			\draw    (319.8,57.8) -- (294.8,127.8) ;
			%Straight Lines [id:da5335408216794831] 
			\draw    (269.8,57.8) -- (294.8,127.8) ;
			%Straight Lines [id:da5340157921957689] 
			\draw    (369.8,57.8) -- (294.8,127.8) ;
			%Straight Lines [id:da0004039019550761136] 
			\draw    (419.8,57.8) -- (394.8,127.8) ;
			%Straight Lines [id:da07317705788847373] 
			\draw    (469.8,57.8) -- (394.8,127.8) ;
			%Straight Lines [id:da01446694389808001] 
			\draw    (219.8,57.8) -- (344.8,127.8) ;
			%Straight Lines [id:da17057756168763494] 
			\draw    (269.8,57.8) -- (344.8,127.8) ;
			%Straight Lines [id:da4164771508524785] 
			\draw    (319.8,57.8) -- (344.8,127.8) ;
			%Straight Lines [id:da9802056315418446] 
			\draw    (369.8,57.8) -- (344.8,127.8) ;
			%Straight Lines [id:da948179682400575] 
			\draw    (419.8,57.8) -- (344.8,127.8) ;
			%Straight Lines [id:da37012029420771675] 
			\draw    (469.8,57.8) -- (344.8,127.8) ;
			%Shape: Circle [id:dp37930508300930144] 
			\draw  [fill={rgb, 255:red, 255; green, 255; blue, 255 }  ,fill opacity=1 ][line width=0.75]  (289,127.8) .. controls (289,124.6) and (291.6,122) .. (294.8,122) .. controls (298,122) and (300.6,124.6) .. (300.6,127.8) .. controls (300.6,131) and (298,133.6) .. (294.8,133.6) .. controls (291.6,133.6) and (289,131) .. (289,127.8) -- cycle ;
			%Shape: Circle [id:dp2629082048056386] 
			\draw  [fill={rgb, 255:red, 255; green, 255; blue, 255 }  ,fill opacity=1 ][line width=0.75]  (389,127.8) .. controls (389,124.6) and (391.6,122) .. (394.8,122) .. controls (398,122) and (400.6,124.6) .. (400.6,127.8) .. controls (400.6,131) and (398,133.6) .. (394.8,133.6) .. controls (391.6,133.6) and (389,131) .. (389,127.8) -- cycle ;
			%Shape: Circle [id:dp8201172493322982] 
			\draw  [fill={rgb, 255:red, 255; green, 255; blue, 255 }  ,fill opacity=1 ][line width=0.75]  (339,127.8) .. controls (339,124.6) and (341.6,122) .. (344.8,122) .. controls (348,122) and (350.6,124.6) .. (350.6,127.8) .. controls (350.6,131) and (348,133.6) .. (344.8,133.6) .. controls (341.6,133.6) and (339,131) .. (339,127.8) -- cycle ;
			%Shape: Circle [id:dp35245455667334225] 
			\draw  [fill={rgb, 255:red, 255; green, 255; blue, 255 }  ,fill opacity=1 ][line width=0.75]  (264,57.8) .. controls (264,54.6) and (266.6,52) .. (269.8,52) .. controls (273,52) and (275.6,54.6) .. (275.6,57.8) .. controls (275.6,61) and (273,63.6) .. (269.8,63.6) .. controls (266.6,63.6) and (264,61) .. (264,57.8) -- cycle ;
			%Shape: Circle [id:dp21780470777018945] 
			\draw  [fill={rgb, 255:red, 255; green, 255; blue, 255 }  ,fill opacity=1 ][line width=0.75]  (314,57.8) .. controls (314,54.6) and (316.6,52) .. (319.8,52) .. controls (323,52) and (325.6,54.6) .. (325.6,57.8) .. controls (325.6,61) and (323,63.6) .. (319.8,63.6) .. controls (316.6,63.6) and (314,61) .. (314,57.8) -- cycle ;
			%Shape: Circle [id:dp18783010793301158] 
			\draw  [fill={rgb, 255:red, 255; green, 255; blue, 255 }  ,fill opacity=1 ][line width=0.75]  (214,57.8) .. controls (214,54.6) and (216.6,52) .. (219.8,52) .. controls (223,52) and (225.6,54.6) .. (225.6,57.8) .. controls (225.6,61) and (223,63.6) .. (219.8,63.6) .. controls (216.6,63.6) and (214,61) .. (214,57.8) -- cycle ;
			%Shape: Circle [id:dp58733842927079] 
			\draw  [fill={rgb, 255:red, 255; green, 255; blue, 255 }  ,fill opacity=1 ][line width=0.75]  (364,57.8) .. controls (364,54.6) and (366.6,52) .. (369.8,52) .. controls (373,52) and (375.6,54.6) .. (375.6,57.8) .. controls (375.6,61) and (373,63.6) .. (369.8,63.6) .. controls (366.6,63.6) and (364,61) .. (364,57.8) -- cycle ;
			%Shape: Circle [id:dp3175630626313398] 
			\draw  [fill={rgb, 255:red, 255; green, 255; blue, 255 }  ,fill opacity=1 ][line width=0.75]  (414,57.8) .. controls (414,54.6) and (416.6,52) .. (419.8,52) .. controls (423,52) and (425.6,54.6) .. (425.6,57.8) .. controls (425.6,61) and (423,63.6) .. (419.8,63.6) .. controls (416.6,63.6) and (414,61) .. (414,57.8) -- cycle ;
			%Shape: Circle [id:dp6480423946228673] 
			\draw  [fill={rgb, 255:red, 255; green, 255; blue, 255 }  ,fill opacity=1 ][line width=0.75]  (464,57.8) .. controls (464,54.6) and (466.6,52) .. (469.8,52) .. controls (473,52) and (475.6,54.6) .. (475.6,57.8) .. controls (475.6,61) and (473,63.6) .. (469.8,63.6) .. controls (466.6,63.6) and (464,61) .. (464,57.8) -- cycle ;
			%Shape: Rectangle [id:dp012946642496580285] 
			\draw   (143.7,102.53) -- (545.9,102.53) -- (545.9,153.08) -- (143.7,153.08) -- cycle ;
			%Shape: Rectangle [id:dp2794977369300029] 
			\draw   (143.7,37.8) -- (545.9,37.8) -- (545.9,77.8) -- (143.7,77.8) -- cycle ;
			%Rounded Rect [id:dp16978401143326316] 
			\draw  [dash pattern={on 5.63pt off 4.5pt}][line width=1.5]  (205,50) .. controls (205,47.13) and (207.33,44.8) .. (210.2,44.8) -- (380.8,44.8) .. controls (383.67,44.8) and (386,47.13) .. (386,50) -- (386,65.6) .. controls (386,68.47) and (383.67,70.8) .. (380.8,70.8) -- (210.2,70.8) .. controls (207.33,70.8) and (205,68.47) .. (205,65.6) -- cycle ;
			%Rounded Rect [id:dp7647068091676057] 
			\draw  [dash pattern={on 5.63pt off 4.5pt}][line width=1.5]  (402,50) .. controls (402,47.13) and (404.33,44.8) .. (407.2,44.8) -- (483.8,44.8) .. controls (486.67,44.8) and (489,47.13) .. (489,50) -- (489,65.6) .. controls (489,68.47) and (486.67,70.8) .. (483.8,70.8) -- (407.2,70.8) .. controls (404.33,70.8) and (402,68.47) .. (402,65.6) -- cycle ;
			%Straight Lines [id:da20427749801626371] 
			\draw    (275.8,232.8) -- (350.8,302.8) ;
			%Straight Lines [id:da23053866166185277] 
			\draw    (375.8,232.8) -- (350.8,302.8) ;
			%Straight Lines [id:da36177726511767583] 
			\draw    (325.8,232.8) -- (350.8,302.8) ;
			%Straight Lines [id:da8425873006853286] 
			\draw    (425.8,232.8) -- (350.8,302.8) ;
			%Straight Lines [id:da3897228791144576] 
			\draw    (475.8,232.8) -- (450.8,302.8) ;
			%Straight Lines [id:da22408473737704926] 
			\draw    (525.8,232.8) -- (450.8,302.8) ;
			%Straight Lines [id:da3517532711459589] 
			\draw    (275.8,232.8) -- (400.8,302.8) ;
			%Straight Lines [id:da07479706845401823] 
			\draw    (325.8,232.8) -- (400.8,302.8) ;
			%Straight Lines [id:da01753115787804882] 
			\draw    (375.8,232.8) -- (400.8,302.8) ;
			%Straight Lines [id:da7173387804314355] 
			\draw    (425.8,232.8) -- (400.8,302.8) ;
			%Straight Lines [id:da6840108418388868] 
			\draw    (475.8,232.8) -- (400.8,302.8) ;
			%Straight Lines [id:da37429117830014336] 
			\draw    (525.8,232.8) -- (400.8,302.8) ;
			%Shape: Circle [id:dp6049627651102505] 
			\draw  [fill={rgb, 255:red, 255; green, 255; blue, 255 }  ,fill opacity=1 ][line width=2.25]  (345,302.8) .. controls (345,299.6) and (347.6,297) .. (350.8,297) .. controls (354,297) and (356.6,299.6) .. (356.6,302.8) .. controls (356.6,306) and (354,308.6) .. (350.8,308.6) .. controls (347.6,308.6) and (345,306) .. (345,302.8) -- cycle ;
			%Shape: Circle [id:dp1977812280368284] 
			\draw  [fill={rgb, 255:red, 255; green, 255; blue, 255 }  ,fill opacity=1 ][line width=0.75]  (445,302.8) .. controls (445,299.6) and (447.6,297) .. (450.8,297) .. controls (454,297) and (456.6,299.6) .. (456.6,302.8) .. controls (456.6,306) and (454,308.6) .. (450.8,308.6) .. controls (447.6,308.6) and (445,306) .. (445,302.8) -- cycle ;
			%Shape: Circle [id:dp13288905123030748] 
			\draw  [fill={rgb, 255:red, 255; green, 255; blue, 255 }  ,fill opacity=1 ][line width=0.75]  (395,302.8) .. controls (395,299.6) and (397.6,297) .. (400.8,297) .. controls (404,297) and (406.6,299.6) .. (406.6,302.8) .. controls (406.6,306) and (404,308.6) .. (400.8,308.6) .. controls (397.6,308.6) and (395,306) .. (395,302.8) -- cycle ;
			%Shape: Circle [id:dp9916268455073324] 
			\draw  [fill={rgb, 255:red, 255; green, 255; blue, 255 }  ,fill opacity=1 ][line width=0.75]  (320,232.8) .. controls (320,229.6) and (322.6,227) .. (325.8,227) .. controls (329,227) and (331.6,229.6) .. (331.6,232.8) .. controls (331.6,236) and (329,238.6) .. (325.8,238.6) .. controls (322.6,238.6) and (320,236) .. (320,232.8) -- cycle ;
			%Shape: Circle [id:dp7008331789360446] 
			\draw  [fill={rgb, 255:red, 255; green, 255; blue, 255 }  ,fill opacity=1 ][line width=0.75]  (370,232.8) .. controls (370,229.6) and (372.6,227) .. (375.8,227) .. controls (379,227) and (381.6,229.6) .. (381.6,232.8) .. controls (381.6,236) and (379,238.6) .. (375.8,238.6) .. controls (372.6,238.6) and (370,236) .. (370,232.8) -- cycle ;
			%Shape: Circle [id:dp35159848807877103] 
			\draw  [fill={rgb, 255:red, 255; green, 255; blue, 255 }  ,fill opacity=1 ][line width=0.75]  (270,232.8) .. controls (270,229.6) and (272.6,227) .. (275.8,227) .. controls (279,227) and (281.6,229.6) .. (281.6,232.8) .. controls (281.6,236) and (279,238.6) .. (275.8,238.6) .. controls (272.6,238.6) and (270,236) .. (270,232.8) -- cycle ;
			%Shape: Circle [id:dp8382471414424687] 
			\draw  [fill={rgb, 255:red, 255; green, 255; blue, 255 }  ,fill opacity=1 ][line width=0.75]  (420,232.8) .. controls (420,229.6) and (422.6,227) .. (425.8,227) .. controls (429,227) and (431.6,229.6) .. (431.6,232.8) .. controls (431.6,236) and (429,238.6) .. (425.8,238.6) .. controls (422.6,238.6) and (420,236) .. (420,232.8) -- cycle ;
			%Shape: Circle [id:dp3125647161030134] 
			\draw  [fill={rgb, 255:red, 255; green, 255; blue, 255 }  ,fill opacity=1 ][line width=0.75]  (470,232.8) .. controls (470,229.6) and (472.6,227) .. (475.8,227) .. controls (479,227) and (481.6,229.6) .. (481.6,232.8) .. controls (481.6,236) and (479,238.6) .. (475.8,238.6) .. controls (472.6,238.6) and (470,236) .. (470,232.8) -- cycle ;
			%Shape: Circle [id:dp6880336611986688] 
			\draw  [fill={rgb, 255:red, 255; green, 255; blue, 255 }  ,fill opacity=1 ][line width=0.75]  (520,232.8) .. controls (520,229.6) and (522.6,227) .. (525.8,227) .. controls (529,227) and (531.6,229.6) .. (531.6,232.8) .. controls (531.6,236) and (529,238.6) .. (525.8,238.6) .. controls (522.6,238.6) and (520,236) .. (520,232.8) -- cycle ;
			%Shape: Circle [id:dp4269688634311467] 
			\draw  [fill={rgb, 255:red, 255; green, 255; blue, 255 }  ,fill opacity=1 ][line width=0.75]  (295,302.8) .. controls (295,299.6) and (297.6,297) .. (300.8,297) .. controls (304,297) and (306.6,299.6) .. (306.6,302.8) .. controls (306.6,306) and (304,308.6) .. (300.8,308.6) .. controls (297.6,308.6) and (295,306) .. (295,302.8) -- cycle ;
			%Shape: Circle [id:dp6788705630350758] 
			\draw  [fill={rgb, 255:red, 255; green, 255; blue, 255 }  ,fill opacity=1 ][line width=0.75]  (245,302.8) .. controls (245,299.6) and (247.6,297) .. (250.8,297) .. controls (254,297) and (256.6,299.6) .. (256.6,302.8) .. controls (256.6,306) and (254,308.6) .. (250.8,308.6) .. controls (247.6,308.6) and (245,306) .. (245,302.8) -- cycle ;
			%Shape: Circle [id:dp5873505985382534] 
			\draw  [fill={rgb, 255:red, 255; green, 255; blue, 255 }  ,fill opacity=1 ][line width=0.75]  (170,372.8) .. controls (170,369.6) and (172.6,367) .. (175.8,367) .. controls (179,367) and (181.6,369.6) .. (181.6,372.8) .. controls (181.6,376) and (179,378.6) .. (175.8,378.6) .. controls (172.6,378.6) and (170,376) .. (170,372.8) -- cycle ;
			%Shape: Circle [id:dp15592296630200808] 
			\draw  [fill={rgb, 255:red, 255; green, 255; blue, 255 }  ,fill opacity=1 ][line width=0.75]  (220,372.8) .. controls (220,369.6) and (222.6,367) .. (225.8,367) .. controls (229,367) and (231.6,369.6) .. (231.6,372.8) .. controls (231.6,376) and (229,378.6) .. (225.8,378.6) .. controls (222.6,378.6) and (220,376) .. (220,372.8) -- cycle ;
			%Shape: Circle [id:dp9929188658039186] 
			\draw  [fill={rgb, 255:red, 255; green, 255; blue, 255 }  ,fill opacity=1 ][line width=0.75]  (270,372.8) .. controls (270,369.6) and (272.6,367) .. (275.8,367) .. controls (279,367) and (281.6,369.6) .. (281.6,372.8) .. controls (281.6,376) and (279,378.6) .. (275.8,378.6) .. controls (272.6,378.6) and (270,376) .. (270,372.8) -- cycle ;
			%Shape: Circle [id:dp810960489976672] 
			\draw  [fill={rgb, 255:red, 255; green, 255; blue, 255 }  ,fill opacity=1 ][line width=0.75]  (320,372.8) .. controls (320,369.6) and (322.6,367) .. (325.8,367) .. controls (329,367) and (331.6,369.6) .. (331.6,372.8) .. controls (331.6,376) and (329,378.6) .. (325.8,378.6) .. controls (322.6,378.6) and (320,376) .. (320,372.8) -- cycle ;
			%Shape: Circle [id:dp4694138441791157] 
			\draw  [fill={rgb, 255:red, 255; green, 255; blue, 255 }  ,fill opacity=1 ][line width=0.75]  (370,372.8) .. controls (370,369.6) and (372.6,367) .. (375.8,367) .. controls (379,367) and (381.6,369.6) .. (381.6,372.8) .. controls (381.6,376) and (379,378.6) .. (375.8,378.6) .. controls (372.6,378.6) and (370,376) .. (370,372.8) -- cycle ;
			%Shape: Circle [id:dp05290624230700369] 
			\draw  [fill={rgb, 255:red, 255; green, 255; blue, 255 }  ,fill opacity=1 ][line width=0.75]  (420,372.8) .. controls (420,369.6) and (422.6,367) .. (425.8,367) .. controls (429,367) and (431.6,369.6) .. (431.6,372.8) .. controls (431.6,376) and (429,378.6) .. (425.8,378.6) .. controls (422.6,378.6) and (420,376) .. (420,372.8) -- cycle ;
			%Rounded Rect [id:dp1379047173131367] 
			\draw  [dash pattern={on 5.63pt off 4.5pt}][line width=1.5]  (261,226) .. controls (261,223.13) and (263.33,220.8) .. (266.2,220.8) -- (436.8,220.8) .. controls (439.67,220.8) and (442,223.13) .. (442,226) -- (442,241.6) .. controls (442,244.47) and (439.67,246.8) .. (436.8,246.8) -- (266.2,246.8) .. controls (263.33,246.8) and (261,244.47) .. (261,241.6) -- cycle ;
			%Rounded Rect [id:dp8824235094365718] 
			\draw  [dash pattern={on 5.63pt off 4.5pt}][line width=1.5]  (456,226) .. controls (456,223.13) and (458.33,220.8) .. (461.2,220.8) -- (537.8,220.8) .. controls (540.67,220.8) and (543,223.13) .. (543,226) -- (543,241.6) .. controls (543,244.47) and (540.67,246.8) .. (537.8,246.8) -- (461.2,246.8) .. controls (458.33,246.8) and (456,244.47) .. (456,241.6) -- cycle ;
			%Rounded Rect [id:dp10528637301168964] 
			\draw  [dash pattern={on 5.63pt off 4.5pt}][line width=1.5]  (161,365) .. controls (161,362.13) and (163.33,359.8) .. (166.2,359.8) -- (336.8,359.8) .. controls (339.67,359.8) and (342,362.13) .. (342,365) -- (342,380.6) .. controls (342,383.47) and (339.67,385.8) .. (336.8,385.8) -- (166.2,385.8) .. controls (163.33,385.8) and (161,383.47) .. (161,380.6) -- cycle ;
			%Rounded Rect [id:dp2567588766500404] 
			\draw  [dash pattern={on 5.63pt off 4.5pt}][line width=1.5]  (360,365) .. controls (360,362.13) and (362.33,359.8) .. (365.2,359.8) -- (441.8,359.8) .. controls (444.67,359.8) and (447,362.13) .. (447,365) -- (447,380.6) .. controls (447,383.47) and (444.67,385.8) .. (441.8,385.8) -- (365.2,385.8) .. controls (362.33,385.8) and (360,383.47) .. (360,380.6) -- cycle ;
			%Shape: Polygon Curved [id:ds7046206704528585] 
			\draw  [draw opacity=0][fill={rgb, 255:red, 208; green, 2; blue, 27 }  ,fill opacity=0.1 ] (231,214.7) .. controls (266,177.7) and (510,171.7) .. (564,215.7) .. controls (618,259.7) and (519,276.7) .. (478,305.7) .. controls (437,334.7) and (354,323.7) .. (322,298.7) .. controls (290,273.7) and (196,251.7) .. (231,214.7) -- cycle ;
			%Shape: Polygon Curved [id:ds2655304463102248] 
			\draw  [draw opacity=0][fill={rgb, 255:red, 74; green, 144; blue, 226 }  ,fill opacity=0.1 ] (228,301.7) .. controls (263,282.7) and (317,273.7) .. (371,296.7) .. controls (425,319.7) and (511,365.7) .. (470,394.7) .. controls (429,423.7) and (174,426.7) .. (135,388.7) .. controls (96,350.7) and (193,320.7) .. (228,301.7) -- cycle ;
			
			% Text Node
			\draw (117,52.4) node [anchor=north west][inner sep=0.75pt]    {$X$};
			% Text Node
			\draw (114,120.4) node [anchor=north west][inner sep=0.75pt]    {$Y$};
			% Text Node
			\draw (185,49.4) node [anchor=north west][inner sep=0.75pt]    {$A$};
			% Text Node
			\draw (494,49.4) node [anchor=north west][inner sep=0.75pt]    {$B$};
			% Text Node
			\draw (260,122.4) node [anchor=north west][inner sep=0.75pt]    {$y_{A}$};
			% Text Node
			\draw (411,123.4) node [anchor=north west][inner sep=0.75pt]    {$y_{B}$};
			% Text Node
			\draw (316.8,135.2) node [anchor=north west][inner sep=0.75pt]    {$y_{A\ \cup \ B}$};
			% Text Node
			\draw (236,226.4) node [anchor=north west][inner sep=0.75pt]    {$A$};
			% Text Node
			\draw (550,225.4) node [anchor=north west][inner sep=0.75pt]    {$B$};
			% Text Node
			\draw (136,365.4) node [anchor=north west][inner sep=0.75pt]    {$A$};
			% Text Node
			\draw (452,363.4) node [anchor=north west][inner sep=0.75pt]    {$B$};
			% Text Node
			\draw (58,83.4) node [anchor=north west][inner sep=0.75pt]    { \large $G_m$};
			% Text Node
			\draw (57,277.7) node [anchor=north west][inner sep=0.75pt]    { \large$G_m'$};

		\end{tikzpicture}
		
		\caption{The graphs $G_m$ and $G_m'$. In $G_m$, the only vertices of $X$ which are represented are vertices in~$A \cup B$, and the only vertices of~$Y$ which are represented are $y_A$, $y_B$, $y_{A \cup B}$ (and similarly in $G_m'$). The vertices $y_A$, $y_B$, $y_{A \cup B}$ have the same certificates. The graph $G_m'$ is obtained by taking two copies of~$G_m$ (each one represented in one of the colored areas), and identifying $y_A$ in the first copy with $y_B$ in the second one. The corresponding vertex is the one drawn in bold. Each vertex accepts in~$G_m'$ because it has the same view as a vertex which accepts in~$G_m$.}
		\label{fig:lower_anonymous}
	\end{figure}
	
\end{proof}

	Now, one can wonder if it is possible to obtain a more precise lower bound in Theorem~\ref{thm:parity lower bound}.
	To do so, with this proof technique, we would need to obtain an upper bound on $f(2,c)$ in Theorem~\ref{thm:finite unions}.
	Such an upper bound is proven in~\cite[Chapter 6]{Setyawan98}. More precisely, a general upper bound on $f(k,c)$ is proven, and the particular case for $k=2$ is the following.
	
	\begin{theorem}[\cite{Setyawan98}]
		\label{thm:bound_finite_unions}
		The following inequality holds:
		$$f(2,c) \leqslant c^{3^{c^{\iddots^3}}}$$
		where $c^{3^{c^{\iddots^3}}}$ is a tower of exponential of height $2c$.
	\end{theorem}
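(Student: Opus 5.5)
This is a quantitative bound in Ramsey theory, taken from~\cite{Setyawan98}. My plan is to reprove it by tracking the constants in the classical proof of Theorem~\ref{thm:finite unions} for $k=2$. The standard route to the Finite Unions Theorem goes through the Hales--Jewett theorem, or a Graham--Rothschild type argument. For $k=2$, though, a more direct iterated-pigeonhole argument should give a tower of height $O(c)$, and I would aim for exactly the stated shape: a tower alternating $c$ and $3$, of height $2c$.

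The first step is a \emph{focusing} reduction. Fix a $c$-coloring $\chi$ of the non-empty subsets of $[m]$. Split $[m]$ into consecutive blocks $B_1,\dots,B_\ell$. To each block $B_i$ attach the induced coloring pattern of unions $B_i\cup S$, where $S$ ranges over the unions of a fixed family of earlier blocks. The number of such patterns is at most $c^{3^{t}}$ when the structure below has $t$ relevant coordinates: each coordinate is absent, in the first set, or in the second set, which is where the $3$'s come from. Pigeonhole then yields many blocks sharing the same pattern.

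The second step is an induction on the number of colors actually used. I would maintain a pair (color $i$, candidate family) in which either
\begin{itemize}
\item we directly find disjoint $A,B$ with $\chi(A)=\chi(B)=\chi(A\cup B)$, or
\item we pass to a sub-structure, of size roughly logarithmic-times-exponential smaller, in which color $i$ is forbidden on all relevant unions.
\end{itemize}
Each such elimination costs two levels of the tower, one exponential in $c$ from the pigeonhole over patterns and one exponential in $3$ from the number of ways to place elements into $A$, $B$ or neither. After $c$ eliminations no color remains, which is a contradiction. This gives height $2c$.

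The main obstacle is the elimination step: guaranteeing that a monochromatic pattern among blocks forces either the desired triple $A,B,A\cup B$ or a genuine loss of one color on a large sub-instance. The unions must be controlled uniformly, not just pairs. I would first try the ``condensed'' setting: work with sets that are unions of blocks, so that the problem reduces to the same problem on $[\ell]$ with induced colorings. Then I would show that the pigeonhole-equal pattern makes $\chi(B_a)$, $\chi(B_b)$ and $\chi(B_a\cup B_b)$ depend only on the order type of $(a,b)$. From there, it remains either to find the triple or to recurse with one color removed.

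If the bookkeeping does not yield exactly height $2c$, I would fall back on citing~\cite[Chapter 6]{Setyawan98} directly, since the statement is imported verbatim from there.
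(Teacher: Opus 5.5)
The paper gives no proof of this statement: it imports it from \cite[Chapter~6]{Setyawan98}, so your fallback is exactly what the paper does. The sketch you offer in its place is not a proof, and it fails at the step you flag yourself.

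\textbf{Where the sketch fails.} Equality of the patterns $S\mapsto\chi(B_i\cup S)$, taken relative to a fixed family of earlier blocks, says nothing about $\chi(B_a\cup B_b)$ when $B_a$ and $B_b$ both lie outside that family. Making $\chi(B_a\cup B_b)$ depend only on the order type of $(a,b)$ is a Ramsey statement for pairs. It needs an Erd\H{o}s--Rado style iterated pigeonhole, with its own exponential cost, and does not follow from a single pigeonhole. Even granting it, nothing yields your dichotomy ``triple, or a large sub-instance avoiding color $i$''. If $\chi(B_a)=\alpha\neq\beta=\chi(B_a\cup B_b)$ for all $a<b$, you have found no triple and excluded no color from unions of three or more blocks. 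The count $c^{3^t}$ is also unexplained: a pattern indexed by the $2^t$ unions of $t$ blocks takes at most $c^{2^t}$ values. Finally, no bookkeeping can give ``exactly'' height $2c$ for all $c$. At $c=1$ the stated bound reads $f(2,1)\le 1^3=1$, whereas $f(2,1)=2$ (two disjoint nonempty subsets are needed), so the statement only holds for $c\ge 2$.

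\textbf{The missing idea: restrict to intervals.} With intervals, your color-elimination plan works as written and gives a far better bound.
For $0\le i<j\le m$ let $I_{ij}=\{i+1,\ldots,j\}$, and color the edge $\{i,j\}$ of $K_{m+1}$ by $\chi(I_{ij})$. A monochromatic triangle $i<j<l$ gives disjoint nonempty sets $A=I_{ij}$ and $B=I_{jl}$ with $A\cup B=I_{il}$, as in Schur's theorem.
The elimination step is then the standard one:
\begin{itemize}
\item At least $\lceil m/c\rceil$ right endpoints $j_1<\cdots<j_p$ of the intervals $I_{0j}$ share a color $i$.
\item If some $I_{j_aj_b}$ has color $i$, take $A=I_{0j_a}$ and $B=I_{j_aj_b}$.
\item Otherwise the intervals with both endpoints among the $j_a$ form a $(c-1)$-colored instance of the same kind on $p$ points.
\end{itemize}
This yields $f(2,c)\le R(3;c)-1\le\lfloor e\,c!\rfloor$, where $R(3;c)$ is the $c$-color Ramsey number of the triangle. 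That is far below the tower of height $2c$ for every $c\ge 2$. Plugged into the proof of Theorem~\ref{thm:explicit_lower_bound_parity}, it would even force $e\,c!>m=\Theta(\log n)$, i.e.\ $\Omega(\log\log n/\log\log\log n)$ certificates, improving on $\Omega(\log^\ast n)$.
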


	Using Theorem~\ref{thm:bound_finite_unions}, we are able to get a more precise lower bound on the local complexity of having an even number of vertices.
	For every integers $k, \ell \geqslant 1$, let $\tow(\ell, k) := \ell^{\ell^{\iddots^\ell}}$, where the exponential tower has size~$k$ (for instance, \mbox{$\tow(\ell, 3) = \ell^{\ell^{\ell}}$}).
	Recall that for every integer~$n \geqslant 2$, the \emph{iterated logarithm of~$n$}, denoted by $\log^\ast(n)$, is the smallest integer~$k\geqslant 1$ such that $\tow(2,k) \geqslant n$. In other words, it is the smallest number of times we have to apply the (base-2) logarithm to the integer~$n$ to get a value below~$1$.
	We will prove the following lower bound.

	\begin{theorem}
		\label{thm:explicit_lower_bound_parity}
		In the $1$-local anonymous model, for $n$-vertex graphs, at least $\Omega(\log^\ast(n))$ different certificates are necessary to certify that the number of vertices is even. Thus, in this model, the local complexity of this property is $\Omega(\log(\log^\ast(n)))$.
	\end{theorem}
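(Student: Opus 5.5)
The plan is to make the proof of Theorem~\ref{thm:parity lower bound} quantitative, feeding in the explicit bound of Theorem~\ref{thm:bound_finite_unions}. Suppose a $1$-local anonymous scheme certifies parity and uses at most $c = c(n)$ distinct certificates on $n$-vertex graphs. The proof of Theorem~\ref{thm:parity lower bound} never used that $c$ is constant. It only needed an odd $m \geqslant f(2,c)$, and then produced an accepted graph $G_m'$ with $2n-1$ vertices. The completeness condition is applied to $G_m$, which has $n = m + 2^m - 1$ vertices. So we get a contradiction as soon as the number of certificates available on graphs of size $n_m := m+2^m-1$ satisfies $f(2,c(n_m)) \leqslant m$.

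One subtlety is that $G_m'$ has size $2n_m-1$, while the soundness condition is stated for certificates of any size. We therefore only need to control $c$ at size $n_m$, and soundness on $G_m'$ holds unconditionally. Hence, for every odd $m$, we must have $f(2,c(n_m)) > m$.

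Next we invert the tower bound. By Theorem~\ref{thm:bound_finite_unions}, $f(2,c)$ is at most a tower of height $2c$ alternating $c$ and $3$. Since $c \geqslant 2$, this is at most $\tow(c,2c)$. I would then bound $\tow(c,2c) \leqslant \tow(2, C\cdot c)$ for an absolute constant $C$. This uses the standard fact that $\ell^{x} \leqslant 2^{2^{x}}$ once $x$ is large relative to $\ell$, or more crudely that each level $c^{(\cdot)}$ costs at most $O(\log^\ast c)$ extra levels of base~$2$, and $\log^\ast c \leqslant c$. So $f(2,c) \leqslant \tow(2, C c)$. Combined with $f(2,c(n_m)) > m$, this gives $\tow(2, C c(n_m)) > m$, hence $C\,c(n_m) \geqslant \log^\ast m \geqslant \log^\ast n_m - 2$, because $n_m \leqslant 2^{m+1}$ and so $\log^\ast n_m \leqslant \log^\ast m + 1$. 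Therefore $c(n_m) = \Omega(\log^\ast n_m)$, and the local complexity is $\log_2 c = \Omega(\log\log^\ast n_m)$.

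Finally we extend to all $n$. Even instances exist only for particular sizes here, so the bound holds along the sequence $n_m$, $m$ odd. This is the natural reading of a lower bound. To cover every even $n$, I would pad $G_m$: attach a path of even length to a vertex of $X$. The Ramsey argument is unaffected, since only the $Y$-vertices and their $X$-neighbourhoods matter, and the $Y$-vertices' views are unchanged. The merged graph still has odd size $2n-1$. Choosing the largest odd $m$ with $n_m \leqslant n$ keeps $\log^\ast m = \log^\ast n - O(1)$.

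I expect the main obstacle to be the careful tower comparison $\tow(c,2c) \leqslant \tow(2,O(c))$, which is routine but fiddly. Its consequence $\log^\ast$ of $f(2,c)$ being $O(c)$ is exactly what yields $\Omega(\log^\ast n)$ certificates.
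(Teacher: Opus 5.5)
Your proposal is correct and is essentially the paper's argument: you make the proof of Theorem~\ref{thm:parity lower bound} quantitative through Theorem~\ref{thm:bound_finite_unions}, and your comparison $\tow(c,2c)\leqslant\tow(2,O(c))$ carries the same content as Lemma~\ref{lem:log*} (the paper instead bounds $f(2,c)\leqslant\tow(2c,2c)$ and inverts through $\dlog$), while your explicit restriction to odd $m$ and the padding to all even $n$ are refinements the paper leaves implicit. One small slip is that replacing the $3$'s of the tower by $c$ needs $c\geqslant 3$ rather than $c\geqslant 2$ (for $c=2$ one has $\tow(2,4)=2^{16}<2^{3^{2^{3}}}$); this is harmless because $f(2,\cdot)$ is nondecreasing in $c$, or you can simply use $\tow(2c,2c)$ as the paper does.
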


	To prove Theorem~\ref{thm:explicit_lower_bound_parity}, we will use the bound of Theorem~\ref{thm:bound_finite_unions}. Since the tower of exponential appearing in this bound does not have a constant exponent, we will need to prove first a technical result. Let denote by $\dlog(n)$ (the \emph{diagonal iterated logarithm of~$n$}) the smallest integer~$k\geqslant 1$ such that~$\tow(k,k) \geqslant n$.
	Before proving Theorem~\ref{thm:explicit_lower_bound_parity}, let us prove the following lemma, that gives a comparison between~$\log^\ast$ and~$\dlog$.
	
	\begin{lemma}
		\label{lem:log*}
		We have $\dlog(n) = \Theta(\log^\ast(n))$.
	\end{lemma}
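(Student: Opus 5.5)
We prove the two inequalities $\dlog(n) \leqslant \log^\ast(n)$ and $\log^\ast(n) \leqslant C\cdot\dlog(n)$ (for some absolute constant $C$ and $n$ large) separately.

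\textbf{Upper bound.} It is immediate. Since $\tow(\ell,k)$ is nondecreasing in both $\ell$ and $k$ (for $\ell \geqslant 2$), if $k = \log^\ast(n) \geqslant 2$ then $\tow(k,k) \geqslant \tow(2,k) \geqslant n$. Hence $\dlog(n) \leqslant k$. The small cases $k=1$ are trivial.

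\textbf{Lower bound.} Here the work is to show that a tower of $k$'s of height $k$ is dominated by a tower of $2$'s of height $O(k)$. Precisely, we aim for $\tow(k,k) \leqslant \tow(2, 2k)$, or more generously $\tow(2,k+c\log^\ast k)$; either suffices.

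The key step is a lemma proved by induction on the height $j$: for every $j \geqslant 1$,
\[ \tow(k,j) \leqslant \tow(2, j + m), \qquad m := \log^\ast(k)+1 . \]
For the base case, $k \leqslant \tow(2,m)$ by definition of $\log^\ast$. For the inductive step, write $\tow(k,j+1) = k^{\tow(k,j)} = 2^{\log k \cdot \tow(k,j)}$. It then suffices to show $\log k \cdot \tow(2,j+m) \leqslant \tow(2,j+m)^2 \leqslant 2^{\tow(2,j+m)}$. Here $\log k \leqslant k \leqslant \tow(2,m) \leqslant \tow(2,j+m)$, and $x^2 \leqslant 2^x$ holds once $x \geqslant 4$. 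To keep the exponent of the form $2^{\text{tower}}$ cleanly, it is slightly cleaner to carry the stronger hypothesis with an extra $+1$ in the height (i.e.\ $m = \log^\ast k + 2$) so that $x\geqslant 4$ is automatic. This is the one place where care is needed with the constant slack, and I expect it to be the main (though routine) obstacle: making sure the absorption $\log k \cdot x \leqslant 2^x$ does not cost one extra level per step, only once overall.

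Given the lemma, $\tow(k,k) \leqslant \tow(2, k + \log^\ast k + 2) \leqslant \tow(2, 3k)$ for $k \geqslant 1$. Now let $k = \dlog(n)$, so that $\tow(k,k) \geqslant n$. Then $\tow(2,3k) \geqslant n$, which gives $\log^\ast(n) \leqslant 3\dlog(n)$. Combining this with the upper bound yields $\dlog(n) = \Theta(\log^\ast(n))$.
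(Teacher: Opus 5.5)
Your first inequality is fine (apart from the asymptotically irrelevant case $n=2$, where $\dlog(2)=2>\log^\ast(2)$). Comparing towers upward is a legitimate alternative to the paper's route of applying $\log$ repeatedly to $\tow(k,k)$. However, the key lemma as you state it is not established by your inductive step.

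To go from $\tow(k,j)\leqslant\tow(2,j+m)$ to $\tow(k,j+1)\leqslant\tow(2,j+m+1)=2^{\tow(2,j+m)}$, you need $\log k\cdot\tow(k,j)\leqslant\tow(2,j+m)$. The hypothesis only gives the upper bound $\log k\cdot\tow(2,j+m)$, which exceeds $\tow(2,j+m)$ as soon as $k>2$. Your chain $\log k\cdot x\leqslant x^2\leqslant 2^x$ with $x=\tow(2,j+m)$ bounds the exponent only by $\tow(2,j+m+1)$. So what it actually yields is $\tow(k,j+1)\leqslant\tow(2,j+m+2)$: one level is lost at every step. Replacing $m$ by $\log^\ast k+2$ does not change this, because the loss recurs at each step regardless of the initial offset.

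The loss you were worried about is in fact harmless for a $\Theta$-statement; just let the offset grow. The very same computation proves $\tow(k,j)\leqslant\tow(2,2j+\log^\ast k)$ by induction. The base case is $k\leqslant\tow(2,\log^\ast k)$. For the step, set $x=\tow(2,2j+\log^\ast k)\geqslant 16$; since $\log k\leqslant k\leqslant x$, you get $\tow(k,j+1)\leqslant 2^{x\log k}\leqslant 2^{x^2}\leqslant 2^{2^x}=\tow(2,2(j+1)+\log^\ast k)$. Hence $\tow(k,k)\leqslant\tow(2,2k+\log^\ast k)\leqslant\tow(2,3k)$, and your conclusion $\log^\ast(n)\leqslant 3\dlog(n)$ follows verbatim.

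If you want the sharper height $k+\log^\ast k+2$, you need multiplicative slack in the invariant, e.g.\ $2\log k\cdot\tow(k,j)\leqslant\tow(2,j+m)$ with $m=\log^\ast k+2$. This does propagate: $2\log k\cdot\tow(k,j+1)=2^{\log k\cdot\tow(k,j)+\log(2\log k)}$, and both exponent terms are at most $\tow(2,j+m)/2$. This slack is exactly the role of the factor $2\log m$ in the paper's invariant, which in your notation reads $\log^{(i)}(\tow(k,k))\leqslant 2\tow(k,k-i)\log k$. It lets the paper reach $\log^\ast(\tow(k,k))\leqslant k+\log^\ast k\leqslant 2k$ without losing a level per step.
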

	
	\begin{proof}
		Since we trivially have $\dlog(n) = O(\log^\ast(n))$ from the definition, we just need to show that $\log^\ast(n) = O(\dlog(n))$. Let $m \geqslant 4$. For every~$k \in \N$, let~$\log^{(k)}(m) := \log \log \ldots \log m$, where the logarithm is in base~$2$, and is iterated~$k$ times (of course, if it exists). Note that, for every integers $1 \leqslant k \leqslant m$, $\log^{(k)}(\tow(m,m))$ exists. By induction on~$k \in [m-1]$, let us prove the following inequality: $$\log^{(k)}(\tow(m,m)) \leqslant 2 \tow(m,m-k) \log m $$
		
		The result is straightforward for~$k = 1$, since \mbox{$\log \tow(m,m) = \tow(m,m-1) \log m$}. Let $k \in \{2, \ldots, m-1\}$, and assume that the result holds for $k-1$. In other words, assume that:
		$$\log^{(k-1)}(\tow(m,m)) \leqslant 2 \tow(m,m-k+1) \log m $$
		
		By taking the logarithm of this expression, we get:
		\begin{align*}
			\log^{(k)}(\tow(m,m))
			& \leqslant 1 + \log \log m + \tow(m,m-k) \log m \\
			& \leqslant 2 \tow(m,m-k) \log m \\
		\end{align*}
		
		This proves the induction. In particular, by taking the logarithm of this expression for $k = m-1$, we get:
		\begin{align*}
			\log^{(m)}(\tow(m,m))
			& \leqslant 1 + \log m + \log \log m \\
			& \leqslant m \hspace{4cm} { \footnotesize \text{(because $m \geqslant 4$)} }
		\end{align*}
		
		Finally, since~$\log^\ast(m) \leqslant m$, we get for every integer~$m \geqslant 4$:
		\begin{align*}
			\log^\ast(\tow(m,m))
			& \leqslant m + \log^\ast(\log^{(m)}(\tow(m,m))) \\
			& \leqslant m + \log^\ast(m) \\
			& \leqslant 2m
		\end{align*}
		
		Now, let $n \geqslant \tow(4,4)$, and let $m := \dlog(n)$. Note that $m \geqslant 4$, so we have:
		\begin{align*}
			\log^\ast(n)
			& \leqslant \log^\ast(\tow(m,m)) \\
			& \leqslant 2m \\
			& \leqslant 2\dlog(n)
		\end{align*}
		
		This finally proves that $\log^\ast(n) = O(\dlog(n))$.
	\end{proof}

	\begin{proof}[Proof of Theorem~\ref{thm:explicit_lower_bound_parity}.]
		Assume that $o(\log^\ast n)$ different certificates are sufficient to certify an even number of vertices. For every integer~$m$, let~$n := m + 2^m - 1$.
		
		The proof is the same as the proof of Theorem~\ref{thm:parity lower bound}, but we just need to show that there exists
		an integer~$m$ such that $m \geqslant f(2,c)$, where~$c$ is the number of different certificates (which is a function of~$n$, so a function of~$m$).
		By assumption, we have~\mbox{$c = o(\log^\ast n)$}.
		Since $\log^\ast 2^m = 1 + \log^\ast m$, we have $\log^\ast n = \Theta(\log^\ast m)$. Thus, we also have~$c = o(\log^\ast m)$. By Lemma~\ref{lem:log*}, we have~$c = o(\dlog m)$.
		
		By Theorem~\ref{thm:bound_finite_unions}, we have~$f(2,c) \leqslant \tow(2c,2c)$. Let $m$ be large enough, such that~$c \leqslant (\dlog m - 1)/2$.
		Then, $f(2,c) \leqslant \tow(\dlog m -1, \dlog m -1) < m$.
		The rest of the proof is identical as the proof of Theorem~\ref{thm:parity lower bound}.
	\end{proof}

\begin{remark}
	The lower bound of Theorem~\ref{thm:explicit_lower_bound_parity} still holds in the following classes of graphs:
	\begin{enumerate}[(i)]
		\item bipartite graphs; \label{item:LB_even_bipartite}
		\item chordal graphs. \label{item:LB_even_chordal}
	\end{enumerate}
	Indeed, (\ref{item:LB_even_bipartite}) follows from the fact that graphs~$G_m$ and~$G_m'$ involved in the proof of Theorem~\ref{thm:parity lower bound} are bipartite. To prove~(\ref{item:LB_even_chordal}), we just need to change a bit the definition of~$G_m$ in the proof of Theorem~\ref{thm:parity lower bound}, by adding all the edges between vertices of~$X$. With this modification, the graph~$G_m$ becomes a split graph (that is, a graph whose vertices can be partitioned into a clique and an independent set) which is in particular chordal (that is, without any induced cycle of length at least~$4$). The proof still works identically, using the fact that gluing two chordal graphs by identifying one vertex in each gives another chordal graph.
\end{remark}

\vfill{}

	\textbf{AI Disclosure:} We used an LLM to suggest alternative formulations in the introduction. The authors verified the correctness and originality of all content including references.

\newpage{}

\bibliographystyle{plainurl}
\bibliography{biblio}

\end{document}